\colorlet{lblue}{blue!50!white}
\colorlet{lred}{red!50!white}
\colorlet{lgreen}{green!50!white}
\colorlet{lpurple}{purple!50!white}
\colorlet{lorange}{orange!50!white}
\colorlet{lpink}{pink!50!white}
\colorlet{lbrown}{brown!50!white}
\colorlet{lyellow}{yellow!50!white}
\colorlet{lolive}{olive!50!white}
\title{Decomposing Words for Enhanced Compression:\\ Exploring the Number of Runs in the Extended Burrows-Wheeler Transform}
\titlerunning{Decomposing Words for Enhanced Compression}
\author{Florian Ingels\orcidID{0000-0002-8556-0087} \and
Anaïs Denis \and
Bastien Cazaux\orcidID{0000-0002-1761-4354}}
\authorrunning{F. Ingels, A. Denis and B. Cazaux}
\institute{Univ. Lille, CNRS, Centrale Lille, UMR 9189 CRIStAL, F-59000 Lille, France
\email{\{florian.ingels,bastien.cazaux\}@univ-lille.fr}}
\newcommand{\size}[1]{\vert #1 \vert}
\begin{document}

\maketitle          

\begin{abstract}

The Burrows-Wheeler Transform (BWT) is a fundamental component in many data structures for text indexing and compression, widely used in areas such as bioinformatics and information retrieval. The extended BWT (eBWT) generalizes the classical BWT to multisets of strings, providing a flexible framework that captures many BWT-like constructions. Several known variants of the BWT can be viewed as instances of the eBWT applied to specific decompositions of a word. A central property of the BWT, essential for its compressibility, is the number of maximal ranges of equal letters, named runs. In this article, we explore how different decompositions of a word impact the number of runs in the resulting eBWT. First, we show that the number of decompositions of a word is exponential, even under minimal constraints on the size of the subsets in the decomposition. Second, we present an infinite family of words for which the ratio of the number of runs between the worst and best decompositions is unbounded, under the same minimal constraints. These results illustrate the potential cost of decomposition choices in eBWT-based compression and underline the challenges in optimizing run-length encoding in generalized BWT frameworks.

\keywords{Burrows-Wheeler Transform \and Extended Burrows-Wheeler Transform \and Word Decompositions \and Combinatorics on Words}
\end{abstract}

\section{Introduction}

Text compression issues are ubiquitous in many fields, especially bioinformatics, where the volume of strings to be stored is growing exponentially fast \cite{katz2022sequence}. A widespread method, \emph{run-length encoding} (RLE) \cite{robinson1967results}, consists of replacing consecutive ranges of identical characters by pairs (character, range length). Those ranges are named \emph{runs}, and their number, denoted by $\texttt{runs}(w)$, is a good measure of the compressibility of a string $w$. However, this technique is only really interesting if the data has a low number of ranges. To this end, RLE approaches are almost always coupled with the Burrows-Wheeler transform (BWT) \cite{burrows1994block}, a bijective transform of strings, which has the interesting property of producing long runs when the starting string contains highly repetitive substrings. This property, called the \emph{clustering effect}, has been extensively studied by the literature, see for instance \cite{manzini2001analysis,mantaci2017measuring,biagi2025number,navarro2020indexing,giuliani2021novel,akagi2023sensitivity}. Therefore, the compressibility of a string $w$ is given by $\texttt{runs}(\texttt{bwt}(w))$.

The extended Burrows-Wheeler transform (eBWT) use a similar principle to the BWT, and transforms bijectively not just a single string, but a collection of strings \cite{mantaci2007extension}, into a unique transformed string that can afterwards be compressed using RLE. Note that the eBWT is indeed an extension of the BWT as they coincide when the collection is reduced to a single element, i.e. $\texttt{ebwt}(\lbrace w\rbrace) = \texttt{bwt}(w)$. Although introduced in 2007, eBWT has not really been as widely and deeply embraced as BWT, perhaps notably because it was not until 2021 that a linear-time eBWT construction algorithm was proposed \cite{bannai2021constructing}. As a result, very little work has been devoted to studying the properties of eBWT, particularly from a theoretical point of view -- see nevertheless  \cite{cenzato2024survey}.

This article focuses on a natural question that has never been studied in the literature, to the best of our knowledge: given a string $w$, is it possible to decompose it into a multiset of strings $w_1,\dots,w_m$, so that $w = w_1\cdots w_m$ and
$$\texttt{runs}(\texttt{ebwt}(\lbrace\!\lbrace w_1,\dots,w_m\rbrace\!\rbrace))\leq \texttt{runs}(\texttt{bwt}(w))\quad ?$$
In other words, is it possible to decompose a string so that the eBWT of its decomposition is more compressible than the BWT of the string itself? Can we find an optimal decomposition that minimises the number of runs?

It is worth noting that this idea of ``decompose to compress'' already exists in the literature, via the so-called bijective Burrows-Wheeler transform (BBWT) \cite{gil2012bijective,bannai2021constructing}. It is known that any string $w$ can be uniquely represented by its Lyndon factorization, i.e. there exists unique $w_1,\dots, w_l$ so that $S = w_1\cdots w_l$, with $w_1 \geq \cdots\geq w_l$ (for the standard lexicographical order) and all $w_i$'s are Lyndon words (i.e. $w_i$ is strictly smaller than any of its non-trivial cyclic shifts) \cite{chen1958free}. The idea behind the BBWT is very simple: use the eBWT of this Lyndon factorisation, whose uniqueness ensures the bijectivity of the final transformation; i.e. $\texttt{bbwt}(w) = \texttt{ebwt}(\texttt{lyndon}(w))$.

For the purposes of this article, the Lyndon factorization of a string is just one particular decomposition in the space of all possible decompositions. Moreover, to the best of our knowledge, there is no guarantee that BBWT is more compressible than BWT. In fact, there exist an infinite family of strings for which the BBWT is significantly more compressible than the BWT~\cite{BadkobehBK24}, and conversely, an infinite family of strings where the BWT yields a much better compression than the BBWT~\cite{BannaiIN24}. Another particular -- and trivial -- decomposition is the string itself (since the eBWT and BWT coincide on singletons). All other possible decompositions are, at this stage, \emph{terra incognita}.

The complexity of the problem we investigate is not clear, and we leave open the question of its possible NP-completeness. However, in this article we show:
\begin{enumerate}
    \item that there is an exponential number of possible decompositions, and therefore that brute force is doomed to failure, without great surprise;
    \item that the number of runs of the best possible decomposition of $w$ is bounded by a quantity that does not depend on $\size{w}$, but rather on the minimal size of the subsets of this decomposition -- which means that there is potentially a lot to be gained by searching for a good decomposition;
    \item that there is an infinite family of strings for which the ratio of the number of runs between the worst and best decompositions is not bounded, even under minimal constraints on the size of the subsets in the decomposition  -- in other words, there is also potentially a lot to lose if we decompose without strategy. We use the notion of ratio in a similar spirit to~\cite{GiulianiILPST21}, where the authors show that the ratio between the number of runs in the BWT of a string and that of its reverse is unbounded.
\end{enumerate}

These three points combined justify, for us, the interest in studying this problem, and we hope to raise the curiosity of the community as to its possible resolution or complexity. The remainder of this article is organized as follows:
\begin{itemize}
    \item Section~\ref{sec:definition} provides a precise overview of the concepts discussed in this introduction, including the problem we are investigating.
    \item Section~\ref{sec:main} details the main results of this article, whose proofs are  spread out in Sections~\ref{sec:cardinal},~\ref{sec:special_linear_partition}, and~\ref{sec:antecedents_ba}.
\end{itemize}

\section{Preliminaries}\label{sec:definition}

Given an alphabet $\Sigma = \lbrace a_1,\dots,a_{\sigma}\rbrace$, we use the Kleene operator, i.e. $\Sigma^{\ast}$, to define the set of finite strings over $\Sigma$. For a string $w = a_1 a_2  \cdots a_{n}$, we denote by $\size{w}$ the size of $w$, i.e $\size{w} = n$. The set of all strings of size $n$ is denoted by $\Sigma^n$. The lexicographical order $\prec_{\text{lex}}$ on $\Sigma^n$ is defined as follows: for any two strings $x=a_1\cdots a_n$ and $y=b_1\cdots b_n$, $x \prec_{\text{lex}} y$ if and only if there exists $1 \leq i \leq n$ so that $a_j=b_j$ for all $j< i$ and $a_i<b_i$. The \emph{number of runs} of a string $w=a_1\cdots a_n$, denoted by $\texttt{runs}(w)$, is defined as $\texttt{runs}(w)=\sum_{i=1}^{n-1} \mathds{1}_{a_i \neq a_{i+1}}.$

A string $w'$ is a \emph{circular rotation} of another string $w$ if and only if there exist two strings $u,v\in \Sigma^\ast$ so that $w'=u\cdot v$ and $w=v\cdot u$. For any string $w$, the Burrows-Wheeler transform (BWT) of $w$, denoted by $\texttt{bwt}(w)$, is obtained by concatenating the last characters of the $\size{w}$ circular rotations of $w$, sorted by ascending lexicographical order \cite{burrows1994block}. This transformation is bijective, as $w$ can be reconstructed from $\texttt{bwt}(w)$ --- up to a cyclic rotation.

A string $w$ is said to be periodic if there exist $v\in \Sigma^\ast$ and $n\geq 2$ so that $w=v^n$; otherwise $w$ is said to be primitive. For any string $w$, there exist a unique primitive string $v$, denoted $\text{root}(w)$ and a unique integer $k$, denoted $\exp(w)$, so that $w=v^k$ \cite{lothaire1997combinatorics}. We define the $\omega$-order as follows: for any two strings $u,v\in \Sigma^\ast$, we denote by $u^\omega = u\cdot u\cdot u\cdots$ and $v^\omega = v\cdot v\cdot v\cdots$ the infinite concatenations of $u$ and $v$; then, we say that $u\prec_{\omega}v$ if and only if either (i) $\exp(u)\leq \exp(v)$ if $\text{root}(u)=\text{root}(v)$, or (ii)  $u^\omega\prec_{\text{lex}} v^\omega$ otherwise. Note that $u\prec_{\omega} v \iff u\prec_{\text{lex}}v$ whenever $\size{u}=\size{v}$. Provided a multiset of strings $W=\lbrace\!\lbrace w_1,\dots,w_m\rbrace\!\rbrace$, the extended Burrows-Wheeler transform (eBWT) of $W$, denoted by $\texttt{ebwt}(W)$, is obtained by concatenating the last characters of the $\size{w_1}+\cdots+\size{w_m}$ circular rotations of $w_1,\dots,w_m$, sorted by ascending $\omega$-order \cite{mantaci2007extension}. When arranging these circular rotation into a matrix, the first and last column are usually denoted by $F$ and $L$ and corresponds, respectively, to the letters of $W$ arranged in increasing order, and to $\texttt{ebwt}(W)$. Note that if applied to a singleton, the eBWT coincide with the BWT, i.e. $\texttt{ebwt}(\lbrace w\rbrace) = \texttt{bwt}(w)$.

\begin{example}\label{exple:inverting_ebwt}
Let $W = \lbrace\!\lbrace baa, bab\rbrace\!\rbrace$. The cyclic rotations of the strings of $W$ are : $baa,aba,aab, bab, abb$ and $bba$. Arranging these strings in ascending $\omega$-order leads to the matrix of Figure~\ref{fig:ebwt_calcul}, where $F = aaabbb$ and $L = \texttt{ebwt}(W) = bababa$.
\end{example}

The eBWT is also bijective, as $W$ can also be reconstructed from $\texttt{ebwt}(W)$, up to a cyclic rotation of each string $w_1,\dots,w_m$. Remember that $\texttt{ebwt}(W)$ corresponds to the last column $L$ of the eBWT matrix, consisting of all circular rotations of the strings composing $W$, arranged in increasing $\omega$-order. The first column $F$ can be easily reconstructed, by sorting the characters of $L$ in increasing lexicographical order. We have the following facts -- see for instance \cite[Proposition~2.1]{mantaci2017measuring}.

\begin{proposition}\label{prop:inverting_ebwt}
\begin{enumerate}
    \item For any row $j$ in the eBWT matrix, the letter $F[j]$ cyclically follows $L[j]$ in some of the original strings $w_1,\dots, w_m$
    \item For each letter $a$, the $i$-th occurrence of $a$ in $F$ corresponds to the $i$-th occurrence of $a$ in $L$;
\end{enumerate}
\end{proposition}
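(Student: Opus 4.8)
The first item is essentially a reformulation of the construction, so the plan there is just to unwind the definitions. Fix a row $j$ of the eBWT matrix. By definition it is one of the cyclic rotations of one of the strings $w_1,\dots,w_m$; say it is the rotation of $w_k=c_1\cdots c_n$ starting at position $p$, namely $c_p c_{p+1}\cdots c_{p-1}$ with indices read modulo $n$. Then $F[j]=c_p$ is its first letter and $L[j]=c_{p-1}$ is its last letter, and $c_p$ indeed cyclically follows $c_{p-1}$ in $w_k$.

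For the second item, the plan is to build an explicit order-preserving bijection $\phi$ from the rows whose last letter is $a$ onto the rows whose first letter is $a$, and to read off the correspondence from it. Given a row $R=u\,a$ ending in $a$ -- say, the rotation of $w_k$ starting at position $p$, so $a=c_{p-1}$ -- set $\phi(R)=a\,u$; this is the rotation of the same $w_k$ starting at position $p-1$, hence again a row of the matrix, and it starts with $a$. The inverse map is the ``rotate one step forward'' operation, so $\phi$ is a bijection between the (multisets of) rows ending in $a$ and rows starting with $a$, sending identical rows -- which only occur for periodic $w_k$ -- to identical rows. Moreover, by the first item, $F[\phi(R)]=c_{p-1}=a=L[R]$, so $\phi$ already witnesses the claimed correspondence provided it preserves the order in which rows are listed.

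The crux is therefore to prove that $\phi$ preserves the $\omega$-order. The key identity is $(a\,u)^{\omega}=a\cdot(u\,a)^{\omega}$: for two rows $u\,a$ and $u'\,a$ ending in $a$, comparing $\phi(u\,a)=a\,u$ with $\phi(u'\,a)=a\,u'$ amounts, after deleting the common leading letter $a$ from $(a\,u)^{\omega}$ and $(a\,u')^{\omega}$, to comparing $(u\,a)^{\omega}$ with $(u'\,a)^{\omega}$, i.e. to comparing $u\,a$ with $u'\,a$ -- so the relative order is unchanged. One also has to check the root/exponent tie-breaking clause: a cyclic shift sends a primitive word to a primitive word and preserves the exponent, so $\text{root}(u\,a)=\text{root}(u'\,a)$ if and only if $\text{root}(a\,u)=\text{root}(a\,u')$, and when this holds $\exp(u\,a)=\exp(a\,u)$ and $\exp(u'\,a)=\exp(a\,u')$, so the exponent comparison is transported as well. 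Since the letters of both $F$ and $L$ are indexed by the rows of the matrix taken in ascending $\omega$-order, order-preservation of $\phi$ immediately yields that the $i$-th occurrence of $a$ in $F$ is the $\phi$-image of the $i$-th occurrence of $a$ in $L$, which is exactly the stated correspondence.

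The only real obstacle is the bookkeeping for periodic strings $w_k=v^e$: several rows then coincide, the root/exponent clause of the $\omega$-order is precisely what orders them, and one must define $\phi$ at the level of (string, starting position) pairs to be sure it stays a multiplicity-preserving bijection. The argument above handles this, since a cyclic shift of $v^e$ equals $(v')^e$ for the corresponding cyclic shift $v'$ of $v$, with $v'$ again primitive; note also that the ``delete a common prefix'' step is invoked only in the distinct-root case, where the two infinite words really do differ and the lexicographic comparison is unaffected by the deletion.
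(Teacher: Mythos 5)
Your proof is correct. The paper does not prove this proposition itself --- it simply cites Proposition~2.1 of \cite{mantaci2017measuring} --- and your argument is the standard one behind that reference: the ``rotate the last letter to the front'' bijection (the LF-mapping), shown to preserve the $\omega$-order among rows ending in a fixed letter $a$. You also correctly handle the two points where such a proof usually gets sloppy, namely the root/exponent tie-breaking clause of the $\omega$-order (a one-step cyclic shift preserves primitivity of the root and the exponent) and the bookkeeping of identical rows coming from periodic or duplicated strings, so nothing is missing.
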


We number each character of $F$ and $L$ by its occurrence rank among all identical characters (i.e. the first $a$ is denoted $a_1$, the second $a_2$, the first $b$ is $b_1$, the second $b_2$, and so on). Then, using Proposition~\ref{prop:inverting_ebwt}, we can invert the eBWT by identifying cycles of letters, as shown in Figure~\ref{fig:ebwt_invert}: starting from the first letter $L[1]$, we get the previous letter $F[1]$ (using item 1), then identify it back in $L$ (using item 2), get the previous letter, identify back, and so on, until we cycle back to $L[1]$. If there is any remaining letter not already part of the cycle, we start again the process with this letter, until all cycles are identified, corresponding to the strings $w_1,\dots,w_m$ -- up to a cyclic rotation.

\begin{figure}[h!]
    \centering
\begin{subfigure}[t]{0.4\textwidth}
\centering
    \begin{tikzpicture}[yscale=0.4]

\node at (0,6) {$F$};
\node at (2,6) {$L$};

\draw (-0.25,5.5)--(2.25,5.5);

\node (fa1) at (0,5) {$a$};
\node (fa2) at (0,4) {$a$};
\node (fa3) at (0,3) {$a$};
\node (fb1) at (0,2) {$b$};
\node (fb2) at (0,1) {$b$};
\node (fb3) at (0,0) {$b$};

\node (fa1) at (1,5) {$a$};
\node (fa2) at (1,4) {$b$};
\node (fa3) at (1,3) {$b$};
\node (fb1) at (1,2) {$a$};
\node (fb2) at (1,1) {$a$};
\node (fb3) at (1,0) {$b$};

\node (lb1) at (2,5) {$b$};
\node (la1) at (2,4) {$a$};
\node (lb2) at (2,3) {$b$};
\node (la2) at (2,2) {$a$};
\node (lb3) at (2,1) {$b$};
\node (la3) at (2,0) {$a$};

\end{tikzpicture}
        \caption{Computing the eBWT matrix}
    \label{fig:ebwt_calcul}
\end{subfigure}~
\begin{subfigure}[t]{0.5\textwidth}
\centering
    \begin{tikzpicture}[yscale=0.4]

\node at (0,6) {$F$};
\node at (2,6) {$L$};

\draw (-0.25,5.5)--(2.25,5.5);

\node (fa1) at (0,5) {$a_1$};
\node (fa2) at (0,4) {$a_2$};
\node (fa3) at (0,3) {$a_3$};
\node (fb1) at (0,2) {$b_1$};
\node (fb2) at (0,1) {$b_2$};
\node (fb3) at (0,0) {$b_3$};

\node (lb1) at (2,5) {$b_1$};
\node (la1) at (2,4) {$a_1$};
\node (lb2) at (2,3) {$b_2$};
\node (la2) at (2,2) {$a_2$};
\node (lb3) at (2,1) {$b_3$};
\node (la3) at (2,0) {$a_3$};

\tikzstyle{arc}=[->, thick,lred]
\tikzstyle{retour}=[->, thick,lred,dotted]

\draw[arc] (lb1)--(fa1);
\draw[retour] (fa1)--(la1);
\draw[arc] (la1)--(fa2);
\draw[retour] (fa2)--(la2);
\draw[arc]  (la2)--(fb1);
\draw[retour] (fb1)--(lb1);

\tikzstyle{arc}=[->,thick,lblue]
\tikzstyle{retour}=[->, thick,lblue,dotted]

\draw[arc] (lb2)--(fa3);
\draw[retour] (fa3)--(la3);
\draw[arc] (la3)--(fb3);
\draw[retour] (fb3)--(lb3);
\draw[arc] (lb3)--(fb2);
\draw[retour] (fb2)--(lb2);

\end{tikzpicture}
        \caption{Inverting the eBWT: we get two cycles, leading to the strings \colorbox{lred}{$b_1a_1a_2$} and \colorbox{lblue}{$b_2a_3b_3$}.}
    \label{fig:ebwt_invert}
\end{subfigure}
    \caption{Example of computation and inversion of the eBWT of $W=\lbrace\!\lbrace baa,bab\rbrace\!\rbrace$.}
    \label{fig:ebwt_example}
\end{figure}

A string decomposition $W = \lbrace\!\lbrace w_1,\ldots, w_m\rbrace\!\rbrace$ of a string $w$ is a multiset of strings (possibly with duplicates) where the concatenation of the strings of $W$ corresponds to $w$, i.e. $w= w_{1}\cdots w_{m}$. We denote by $\mathcal{D}(w)$ the set of all possible decompositions of $w$. In this article, we are especially interested in decompositions $W=\lbrace\!\lbrace w_1,\dots,w_m\rbrace\!\rbrace$ where $\forall i,\,\size{w_i}>k$ for some integer $k\geq 1$. In such a case, we call $W$ a $k$-restricted decomposition. The set of all $k$-restricted decompositions of a string $w$ is denoted by $\mathcal{D}_k(w)$ --- with $\mathcal{D}_0(w)=\mathcal{D}(w)$.

As mentioned in the introduction, we are interested in this article in how one can decompose a string $w$ into a multiset of strings $w_1,\dots,w_m$ so that $$\texttt{runs}(\texttt{ebwt}(\lbrace\!\lbrace w_1,\dots,w_m\rbrace\!\rbrace))\leq \texttt{runs}(\texttt{bwt}(w)).$$

As a shortcut, we denote $\texttt{runs}(\texttt{ebwt}(\cdot))$ by $\rho(\cdot)$, so that we can rewrite this equation as $\rho(\lbrace\!\lbrace w_1,\dots,w_m\rbrace\!\rbrace) \leq \rho(w).$

There is an obvious decomposition, which consists of decomposing $w$ into as many one-letter strings as $\size{w}$, so that the eBWT of the resulting set is simply the letters of $w$ sorted in lexicographical order, and so the number of runs equals the number of different letters in $w$, which is optimal. If one wants to reconstruct $w$ by inverting the eBWT of a decomposition $w_1,\dots,w_m$, one must be able to recover, on the one hand, the original circular rotations of the strings and, on the other hand, their original order. While these practical considerations are beyond the scope of this article, they highlight why the trivial decomposition proposed above is of no practical interest. As a way to constrain the problem and get rid of this case, we propose to consider $k$-restricted decompositions.

We now formally introduce our problem of interest:

\begin{problem}\label{problem}
Provided $k\geq 1$ and $w\in\Sigma^\ast$, find $W\in\mathcal{D}_k(w)$ so that $\rho(W) \leq \rho(w)$. Alternatively, find $W\in\mathcal{D}_k(w)$ such that $\rho(W)$ is minimal.
\end{problem}

\section{Main results}\label{sec:main}

As mentioned in the introduction, we do not intend to propose an algorithm (or a heuristic) to solve Problem~\ref{problem} in this article, in the same way that its possible NP-completeness is left open. However, we propose three results which, in our view, justify studying this problem in further research; we also hope that the community will find interest and engage with these questions.

First of all, and without much surprise, exploring all the possible decompositions is doomed to failure, as a result of combinatorial explosion.
\begin{theorem}\label{thm:cardinal}
For any $k\geq 1$, there exist a constant $r >1$ and a complex polynomial $P\in \mathbb{C}[X]$ so that $\size{\mathcal{D}_k(w)}\underset{n\to\infty}{\sim} |P(n)|\cdot r^n$, for any string $w\in \Sigma^n$.
\end{theorem}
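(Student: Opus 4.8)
The key observation is that $\size{\mathcal{D}_k(w)}$ does not actually depend on $w$, but only on $n = \size{w}$. Indeed, a decomposition $W = \lbrace\!\lbrace w_1,\dots,w_m\rbrace\!\rbrace \in \mathcal{D}_k(w)$ is nothing more than a way of writing $w = w_1\cdots w_m$ as an ordered concatenation of blocks, each of length $> k$, together with the act of forgetting the order of the blocks (since $W$ is a multiset). So I would first set up a clean bijection-counting argument: the number of \emph{ordered} factorizations of $w$ into blocks of length $>k$ is exactly the number of compositions of $n$ into parts each $\geq k+1$; call this $c_k(n)$. Passing from ordered factorizations to multisets is not a clean division (equal blocks collapse), so instead I would count multisets directly: a $k$-restricted decomposition is determined by choosing a multiset of \emph{positions-lengths}, i.e. an unordered collection of lengths $\ell_1,\dots,\ell_m$ with each $\ell_i \geq k+1$ and $\sum \ell_i = n$, \emph{together with}, for each choice of which block occupies which "slot", the corresponding factor of $w$. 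The subtlety: different orderings of the same length-multiset may yield different multisets of factors, or coincidentally the same one. I expect the cleanest route is to show $\size{\mathcal{D}_k(w)}$ equals the number of ordered factorizations $c_k(n)$ up to a bounded correction, or — more robustly — to directly give an unambiguous bijection between $\mathcal{D}_k(w)$ and some set counted by a rational generating function. Concretely: an element of $\mathcal{D}_k(w)$ is the same as a \emph{set partition of the set of cut-points} $\{0,1,\dots,n\}$ compatible with the length constraint — but actually, since the multiset loses order, the honest statement is that $\mathcal{D}_k(w)$ is in bijection with the set of ordered factorizations \emph{modulo} the equivalence "same multiset of factors", and I would argue this equivalence has classes of size bounded by a constant depending only on $k$ (as blocks of bounded content... no), so asymptotically $\size{\mathcal{D}_k(w)} = \Theta(c_k(n))$. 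For the \emph{exact} asymptotic with the stated form $|P(n)|\cdot r^n$, the right move is to observe that the generating function $\sum_n c_k(n) x^n = \frac{1}{1 - (x^{k+1}+x^{k+2}+\cdots)} = \frac{1-x}{1-2x+x^{k+2}}$ is rational, so $c_k(n)$ is a quasi-polynomial-times-exponential combination of the reciprocal roots of the denominator $Q_k(x) = 1 - 2x + x^{k+2}$; standard transfer-matrix / linear-recurrence analysis then yields $c_k(n) \sim C\,r^n$ where $1/r$ is the dominant (smallest-modulus) root of $Q_k$, and one checks $r>1$ since $Q_k(1/2) = (1/2)^{k+2} > 0$ while $Q_k(x)\to -\infty$ has a root in $(1/2, 1)$... wait, $Q_k(1) = x^{k+2}|_{x=1} = 1-2+1 = 0$, so $x=1$ is always a root; dividing it out gives the effective denominator whose dominant root lies in $(0,1)$, forcing $r > 1$.

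So the steps, in order, are: (1) prove $\size{\mathcal{D}_k(w)}$ depends only on $n$ (state it as a lemma: it is the number of multisets of factors, which is determined by $n$ alone); (2) relate it to the composition-counting sequence $c_k(n)$, either by an exact bijection or by bounding the order-forgetting collapse, and write down the rational generating function; (3) factor the denominator, peel off the spurious root $x=1$, identify the dominant reciprocal root $r$, and verify $r>1$; (4) invoke the standard asymptotics for coefficients of rational functions (partial fractions / linear recurrences) to conclude $\size{\mathcal{D}_k(w)} \sim |P(n)|\cdot r^n$, where the polynomial $P$ and the possible modulus/phase of $r$ being complex account for the $|P(n)|$ and potential oscillation if the dominant root is not unique or not real.

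\textbf{The main obstacle.} The analytic part (steps 3--4) is completely standard — rational generating functions have well-understood coefficient asymptotics, and that is presumably why the statement is phrased with a complex polynomial $P$ and absolute values, anticipating a dominant root that could in principle be complex or have multiplicity. The genuine difficulty is step (1)--(2): carefully arguing that forgetting the block order in the multiset does not distort the count beyond a factor that is negligible for the asymptotics. Two ordered factorizations of a \emph{specific} $w$ give the same multiset iff the sequences of factors are permutations of each other; a priori $w$ could be highly structured (e.g. $w = a^n$), making many ordered factorizations collapse together. The saving grace is precisely that we only claim an asymptotic equivalence and that even in the most degenerate case $w = a^n$ the count $\size{\mathcal{D}_k(a^n)}$ is the number of \emph{partitions} of $n$ into parts $\geq k+1$, which grows subexponentially — so the theorem as stated, $\sim |P(n)| r^n$ with $r>1$, would \emph{fail} for $w=a^n$! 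This means I must reread the statement: it says "for any string $w\in\Sigma^n$", so either the intended reading is that $r$ and $P$ may depend on $w$, or there is an implicit assumption (e.g. $\size{\Sigma}\geq 2$ and $w$ not a power of a single letter, or $w$ primitive). The honest obstacle, then, is pinning down the hypothesis on $w$ under which the collapse from ordered factorizations to multisets affects only a lower-order term: I would need that $w$ has "enough distinct factors", e.g. $w$ is primitive and aperiodic enough that the number of ordered factorizations mapping to a given multiset is $2^{o(n)}$; once that is secured, $\size{\mathcal{D}_k(w)} \sim c_k(n)$ and the rest is routine.
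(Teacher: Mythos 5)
Your overall route is the same as the paper's: identify $k$-restricted decompositions with compositions of $n$ into parts of size at least $k+1$, then apply standard asymptotics for linear recurrences / rational generating functions, with the growth constant exceeding $1$. (The paper counts these compositions via generalized Fibonacci numbers $G^{k+1}_{n-(k+1)}$, whose characteristic polynomial is $X^{k+1}-X^k-1$, and proves the dominant root has modulus $>1$ via Smyth's Mahler-measure bound; your intermediate-value argument on the denominator is a more elementary substitute and works.) However, your generating function is wrong: $\frac{1}{1-(x^{k+1}+x^{k+2}+\cdots)}=\frac{1-x}{1-x-x^{k+1}}$, not $\frac{1-x}{1-2x+x^{k+2}}$; the latter equals $\frac{1}{1-x-x^2-\cdots-x^{k+1}}$ and counts compositions into parts \emph{at most} $k+1$. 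With the correct denominator $1-x-x^{k+1}$, the point $x=1$ is not a root (the value is $-1$), so there is no spurious factor to peel off; the reciprocal polynomial is exactly the paper's $X^{k+1}-X^k-1$, and your sign-change argument should be run on $1-x-x^{k+1}$ (positive at $0$, negative at $1$), giving a root in $(0,1)$ and hence $r>1$. Note also that to land on the precise form $\size{\mathcal{D}_k(w)}\sim|P(n)|\,r^n$ one still needs the root of maximal modulus to dominate alone (several roots of equal maximal modulus can make the coefficients oscillate); the paper is terse on this point too, but a complete write-up should address it rather than just invoke ``standard asymptotics.''

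On what you call the main obstacle: the paper does not confront it at all. It simply declares that a decomposition ``is entirely described by the ordered list $t_1,\dots,t_p$'' of block lengths, i.e.\ it counts cut positions (compositions) and concludes $\size{\mathcal{D}_k(w)}=C(n,k+1)$ for every $w\in\Sigma^n$. Under a strict multiset reading of the definition your objection is correct: for $w=a^n$ distinct compositions collapse to the same multiset, $\size{\mathcal{D}_k(a^n)}$ becomes a partition count, which is subexponential, and the uniform statement would fail; the intended reading is evidently the ordered-factorization one. So you have spotted a genuine imprecision in the definition/statement, but your proposal then stalls exactly where a decision is needed: you first assert the collapse classes are boundedly small (then retract it), and fall back on ``$\size{\mathcal{D}_k(w)}=\Theta(c_k(n))$,'' which is both unproven in general and too weak to conclude an asymptotic equivalence $\sim|P(n)|\,r^n$. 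The fix is to commit to one reading: either count decompositions as ordered factorizations (cut positions), in which case your steps (1)--(2) are immediate and you recover the paper's proof, or add an explicit aperiodicity-type hypothesis on $w$ and actually bound the collapse, which neither you nor the paper does.
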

\begin{proof}
The proof is deferred to Section~\ref{sec:cardinal}.
\end{proof}

Nevertheless, the next result shows that finding an optimal decomposition can lead to a number of runs that is independent of the size of the initial string, and therefore highlights the potential gain in terms of compressibility.
\begin{theorem}\label{thm:best}
For any $k\geq 1$ and any string $w$, we have $$\displaystyle\min_{W\in\mathcal{D}_k(w)} \rho(W) \leq \sigma^{k+1}+4k+2.$$
\end{theorem}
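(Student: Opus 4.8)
The plan is to prove the bound by constructing, for every $w$, an explicit member of $\mathcal{D}_k(w)$ whose eBWT has at most $\sigma^{k+1}+4k+2$ runs; this immediately bounds the minimum. We may assume $\size{w}\ge k+1$, since otherwise $\mathcal{D}_k(w)=\emptyset$. The decomposition I have in mind slices $w$ into blocks of length exactly $k+1$, using one longer block to absorb the remainder: writing $\size{w}=q(k+1)+r$ with $0\le r\le k$, take $q$ blocks of length $k+1$ when $r=0$, and $q-1$ blocks of length $k+1$ followed by one block of length $(k+1)+r$ when $r\ge 1$ (this degenerates to $W=\lbrace\!\lbrace w\rbrace\!\rbrace$ when $q=1$). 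In all cases every block has length in $\{k+1,\dots,2k+1\}$, so $W\in\mathcal{D}_k(w)$, and at most one block has length $>k+1$; hence the cyclic rotations contributed by blocks whose length is not $k+1$ number at most $2k+1$. Call these the \emph{special} rotations.

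Next I would bound $\rho(W)=\texttt{runs}(\texttt{ebwt}(W))$ directly from the eBWT matrix of $W$. Let $L=\texttt{ebwt}(W)$ be its last column, whose positions correspond to the cyclic rotations of the blocks of $W$ sorted in $\omega$-order, and delete the at most $2k+1$ positions carrying special rotations. Every surviving position holds a cyclic rotation of length exactly $k+1$; since the $\omega$-order restricted to words of equal length coincides with the lexicographic order, equal rotations occupy consecutive surviving positions, there are at most $\sigma^{k+1}$ distinct such rotations, and within a maximal block of equal rotations the last letter is constant. Hence the surviving subsequence of $L$ satisfies $\texttt{runs}(\cdot)\le\sigma^{k+1}-1$. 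Re-inserting the deleted letters one at a time increases $\texttt{runs}$ by at most $2$ per insertion, since inserting a letter $c$ between neighbours $a,b$ changes the local contribution by $\mathds{1}_{a\neq c}+\mathds{1}_{c\neq b}-\mathds{1}_{a\neq b}\le 2$. Summing, $$\rho(W)\le(\sigma^{k+1}-1)+2(2k+1)=\sigma^{k+1}+4k+1\le\sigma^{k+1}+4k+2.$$

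For the small words not comfortably covered above — those with $k+1\le\size{w}\le 2k+1$, where $W$ is forced to be the singleton $\lbrace\!\lbrace w\rbrace\!\rbrace$ — the claim is even easier, since $\rho(w)=\texttt{runs}(\texttt{bwt}(w))\le\size{w}-1\le 2k\le\sigma^{k+1}+4k+2$; so the construction above is genuinely needed only when $\size{w}\ge 2k+2$. Note that the argument never requires a general statement about the $\omega$-order respecting common prefixes: it suffices to work on the sublist of same-length rotations, where $\omega$-order is literally lexicographic order, and to absorb the $O(k)$ rotations of anomalous length as a bounded perturbation. The only step that calls for a bit of care is this choice of decomposition — keeping every block longer than $k$ while ensuring that all but $O(k)$ rotations share the uniform length $k+1$ — and I expect that piece of bookkeeping, rather than the run-counting, to be the main (and still modest) obstacle.
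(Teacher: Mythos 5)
Your proof is correct, and it rests on the same construction as the paper's: cut $w$ into blocks of length exactly $k+1$ plus one longer block of length at most $2k+1$ absorbing the remainder (this is exactly the decomposition of Proposition~\ref{prop:linear_euclidean_partition} with $p=k+1$), and then charge each letter of the anomalous block at most $2$ runs (the paper's Proposition~\ref{prop:remove_one_occurrence}, item~2, which you re-derive as an insertion rather than a deletion argument). Where you genuinely differ is in how the $\sigma^{k+1}$ term is obtained: the paper computes $\texttt{ebwt}(\Sigma^{k+1})$ exactly (Lemma~\ref{lemma:all_strings}) and transfers the bound $\sigma^{k+1}$ to an arbitrary multiset of length-$(k+1)$ strings through duplicate removal and inclusion monotonicity (Corollaries~\ref{corollary:remove_duplicates} and~\ref{corollary:inclusion}, yielding Corollary~\ref{cor:multiset_equal_size}), whereas you argue directly on the last column: the surviving rotations all have length $k+1$, so their $\omega$-order is lexicographic, identical rotations sit in consecutive positions with a constant last letter, and there are at most $\sigma^{k+1}$ distinct values, hence at most $\sigma^{k+1}-1$ run boundaries. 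Your route is more self-contained --- it needs neither the exact structure of $\texttt{ebwt}(\Sigma^{k+1})$ nor the monotonicity corollaries --- and even gives the marginally sharper constant $\sigma^{k+1}+4k+1$; the paper's route instead builds general-purpose tools (effect of removing one string, monotonicity under inclusion) that it reuses throughout Section~\ref{sec:special_linear_partition}. Your separate handling of the short words with $k+1\le\size{w}\le 2k+1$ is harmless but not really needed, since your main count degenerates gracefully in that case.
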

\begin{proof}
The proof is deferred to Section~\ref{sec:special_linear_partition}.
\end{proof}

Finally, to highlight the potential loss of decomposing without any particular strategy, we show in the next result that there is an infinite family of strings for which the ratio between the worst decomposition and the best is unbounded. 

\begin{theorem}\label{thm:ratio}
For any $M\geq 0$ and any $k\geq 1$, there exists $w\in \Sigma^\ast$ so that $$\frac{\displaystyle\max_{W\in \mathcal{D}_k(w)} \rho(W)}{\displaystyle\min_{W\in\mathcal{D}_k(w)}\rho(W)}\geq M.$$
\end{theorem}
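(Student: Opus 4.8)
The plan is to bound the denominator with Theorem~\ref{thm:best} and to make the numerator large via the trivial singleton decomposition. For every $w$, Theorem~\ref{thm:best} gives $\min_{W\in\mathcal{D}_k(w)}\rho(W)\leq C$ with $C:=\sigma^{k+1}+4k+2$ depending only on $\sigma$ and $k$. On the other hand, as soon as $\size{w}>k$ the singleton $\lbrace w\rbrace$ lies in $\mathcal{D}_k(w)$, and since the eBWT of a singleton is the BWT we obtain $\max_{W\in\mathcal{D}_k(w)}\rho(W)\geq\rho(\lbrace w\rbrace)=\texttt{runs}(\texttt{bwt}(w))$. Hence it suffices to show: for every integer $R$ there is a word $w$ over a two-letter sub-alphabet $\lbrace a,b\rbrace\subseteq\Sigma$ with $\size{w}>k$ and $\texttt{runs}(\texttt{bwt}(w))\geq R$. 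Indeed, the ratio of the statement is then at least $R/C$ for that $w$, so it is enough to take $R\geq M\,C$; and the denominator is nonzero because $w$ is non-unary, so the eBWT of any decomposition of $w$ --- being a rearrangement of the letters of $w$ --- contains both letters.

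To build such a $w$ I would use a counting (pigeonhole) argument. Fix the alphabet $\lbrace a,b\rbrace$ and an integer $m$. Since $\texttt{bwt}$ is injective up to cyclic rotation (Section~\ref{sec:definition}), two binary words of length $m$ with the same BWT are cyclic rotations of one another, so every fiber of the map $w\mapsto\texttt{bwt}(w)$ on $\lbrace a,b\rbrace^{m}$ has size at most $m$; therefore its image has size at least $2^{m}/m$. Each element of the image is a binary string of length $m$, whereas the number of binary strings of length $m$ having at most $R$ runs is at most $2\sum_{i=0}^{R}\binom{m-1}{i}\leq 2(R+1)(m-1)^{R}$ --- a string with exactly $j$ runs being determined by its $j$ ``boundary'' positions together with its first letter. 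For fixed $R$ this is polynomial in $m$, hence strictly smaller than $2^{m}/m$ for $m$ large enough; so for such $m$ some binary word $w$ of length $m$ satisfies $\texttt{runs}(\texttt{bwt}(w))>R$. Taking in addition $m>k$ gives the word we need.

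Putting it together: given $M\geq 0$ and $k\geq 1$ (we may assume $M>0$), set $R:=\lceil M\,C\rceil$, pick $m$ large enough with $m>k$, and let $w\in\lbrace a,b\rbrace^{m}$ be a word with $\texttt{runs}(\texttt{bwt}(w))>R$. Then $\max_{W\in\mathcal{D}_k(w)}\rho(W)\geq\texttt{runs}(\texttt{bwt}(w))>R\geq M\,C$, while $\min_{W\in\mathcal{D}_k(w)}\rho(W)\leq C$, so the ratio is $>M$.

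With this reduction the argument is essentially complete, and the only non-routine ingredient is the lower bound on $\texttt{runs}(\texttt{bwt}(\cdot))$ over a fixed alphabet; the counting argument supplies it, though only existentially (it yields $\max_{\size{w}=m}\texttt{runs}(\texttt{bwt}(w))=\Omega(m/\log m)$), and that is the step I expect to require the most care. A more constructive alternative would be to fix an explicit run-rich target string $L^{\ast}$, recover its eBWT-preimage via the cycle-chasing inversion described after Proposition~\ref{prop:inverting_ebwt}, and check this multiset is a $k$-restricted decomposition of the concatenation of its parts; the genuine difficulty there is to choose $L^{\ast}$ so that simultaneously $\texttt{runs}(L^{\ast})$ is unbounded and every part of the preimage has length more than $k$ --- note that if all parts of a decomposition have length at most $\ell$ then its eBWT has fewer than $\sigma+\sigma^{2}+\cdots+\sigma^{\ell}$ runs, since the eBWT is constant on each maximal block of equal circular rotations in the sorted matrix, so the parts must in fact be of unbounded length.
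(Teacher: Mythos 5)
Your proof is correct, but it reaches the key ingredient by a genuinely different route than the paper. Both arguments bound the denominator by Theorem~\ref{thm:best}; the difference is in how a decomposition with many runs is produced. The paper works backwards from the transform: it takes $W(n)$ to be the eBWT-preimage of $(ba)^n$, so that $\rho(W(n))=2n-1=\size{w}-1$ is the maximum possible, and then needs the whole analysis of Section~\ref{sec:antecedents_ba} (cycle structure of the inversion, the circulant matrix $2S-I$, and the choice $n\equiv 0 \bmod \prod_{k'=2}^{k}(2^{k'}-1)$) to guarantee every component of $W(n)$ has length greater than $k$ --- precisely the ``genuine difficulty'' you flag in your constructive alternative, and the reason the paper cannot simply use BWT-preimages of $(ba)^n$, whose existence for infinitely many $n$ is tied to Artin's conjecture (Appendix~\ref{app:artin_conjecture}). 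You instead take the numerator from the trivial singleton decomposition $\lbrace w\rbrace\in\mathcal{D}_k(w)$ (valid since $\size{w}>k$) and show by a pigeonhole count that binary words with arbitrarily many BWT runs exist: fibers of $\texttt{bwt}$ have size at most $m$ by injectivity up to rotation, so the image on $\lbrace a,b\rbrace^m$ has size at least $2^m/m$, while strings with at most $R$ runs number only $O(m^R)$; your handling of the nonzero denominator (the eBWT of any decomposition is a rearrangement of the letters of $w$) is also sound. What each approach buys: yours is short, elementary, avoids Section~\ref{sec:antecedents_ba} entirely, and works verbatim for every $k\geq 1$, but it is non-constructive and only yields $\Omega(m/\log m)$ runs; the paper's construction is heavier but explicit, and it exhibits decompositions attaining the absolute worst case $\rho(W)=\size{w}-1$, a stronger statement about how badly an unstrategic decomposition can behave.
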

\begin{proof}
Using Theorem~\ref{thm:best}, it actually suffices to find a string $w$ so that
$$\max_{W\in \mathcal{D}_k(w)} \rho(W)\geq M\cdot \left(\sigma^{k+1}+4k+2\right).$$

In upcoming Section~\ref{sec:antecedents_ba}, we show that, for any $k\geq 1$, there exist a infinite family of strings $w\in\Sigma^\ast$ for which there exists $W\in \mathcal{D}_k(w)$ so that $\rho(W)=\size{w}-1$, which is maximal. Therefore it suffices to choose any string $w$ from said family so that $\size{w}-1\geq M\cdot \left(\sigma^{k+1}+4k+2\right)$.
\end{proof}

It is worth noting that Theorem~\ref{thm:ratio} is proven in the case k = 0 by~\cite{BannaiIN24} and~\cite{BadkobehBK24} by comparing two specific decompositions: the trivial decomposition (BWT) and the Lyndon factorization (BBWT).

As a conclusion, we hope that the combination of these three results proves the relevance of studying Problem~\ref{problem}. In anticipation of further research, we offer interested readers an online tool for exploring the possible decompositions of a string: \url{http://bcazaux.polytech-lille.net/EBWT/}.

\section{On the number of $k$-restricted decompositions}\label{sec:cardinal}

The goal of this section is to prove Theorem~\ref{thm:cardinal}, that is, to quantify the cardinality of $\mathcal{D}_k(w)$ and to find an asymptotic equivalent of this cardinality.

Let $k\geq 1$ and $w\in \Sigma^n$ for some $n\geq k+1$. Let $W\in \mathcal{D}(w)$ be a decomposition of $w$, i.e. $W = \lbrace\!\lbrace w_1,\dots,w_p\rbrace\!\rbrace$ and $w = w_1\cdots w_p$. Denoting by $a_1,\dots, a_n$ the letters of $w$, and $t_i=\size{w_i}$, notice that $w_1 = a_1\cdots a_{t_1}$, $w_2= a_{t_1+1}\cdots a_{t_1+t_2}$, and more generally $$w_i = a_{1 + t_1 + \cdots + t_{i-1}}\cdots a_{t_1+\cdots+t_i}.$$

Since the letters $a_1,\dots, a_n$ are fixed, any decomposition $W\in\mathcal{D}(w)$ is therefore entirely described by the ordered list of number $t_1,\dots,t_p$, with $t_1+\cdots+t_p = n$. Such an ordered list is called a \emph{composition} of $n$. A restricted composition is a composition where additional constraints are added on the $t_i$'s; for instance $t_i \in A$ for some subset $A\subset \mathbb{N}$ \cite{heubach2004compositions}. In our context, we are interested in restricted compositions where $t_i\geq k+1$ -- that we call \emph{$(k+1)$-restricted compositions}. We denote by $C(n,k)$ the number of $k$-restricted compositions of $n$ and by $C(n,k,p)$ the number of $k$-restricted compositions of $n$ with exactly $p$ summands. It is clear that (i) $\size{\mathcal{D}_k(w)} = C(n,k+1)$ -- again with $\size{w}=n$ -- and (ii) $C(n,k) = \displaystyle\sum_{p=1}^{\lfloor \frac{n}{k}\rfloor} C(n,k,p)$. We easily have $C(n,k,p) = \binom{n-pk+p-1}{p-1}$, using a stars and bars arguments -- see also \cite{jaklivc2010closed}. Therefore,
$$C(n,k) = \sum_{p=1}^{\lfloor \frac{n}{k}\rfloor} \binom{n-kp+p-1}{p-1}\underset{j=p-1}{=} \sum_{j=0}^{\lfloor \frac{n-k}{k}\rfloor} \binom{n-k-kj+j}{j}.$$

Harris \& Styles proved in \cite{harris1964generalization} that $\displaystyle\sum_{p=0}^{\lfloor \frac{n}{c}\rfloor} \binom{n-pc+p}{p}=G_n^c$;
where $G_n^c$ designates the $n$-th generalized Fibonacci number \cite{bicknell1996classes}, defined as follows: for any integer $c\geq 1$, $G_0^c = \dots = G_{c-1}^c = 1$ and for $n\geq c$, $G_n^c = G^c_{n-1}+G^c_{n-c}$. 

% With $c=1$, we have $G^1_0=1$, $G^1_n = 2G^1_{n-1}$ and therefore $G^1_n = 2^n$ for all $n$. With $c=2$, we retrieve Fibonacci numbers, i.e. $G_n^2 = F_n$; and for $c=3$, with $G^3_0=G^3_1=G^3_2=1$ and $G^3_n = G^3_{n-1}+G^3_{n-3}$, we recover the so-called Narayana's cows sequence \cite{allouche1996narayana}, referenced as sequence \href{https://oeis.org/A000930}{A000930} in the OEIS\footnote{OEIS Foundation Inc. (2025), The On-Line Encyclopedia of Integer Sequences, Published electronically at \url{https://oeis.org}.}:
% $$1,1,1,2,3,4,6,9,13,19,28,41,60,88,\dots$$

Combining this result with (i), we get the following.
\begin{proposition}\label{prop:compositions_cardinal}
For $k\geq 1$, $n\geq k+1$, and $w\in\Sigma^n$, $\size{\mathcal{D}_k(w)} = G_{n-(k+1)}^{k+1}$.
\end{proposition}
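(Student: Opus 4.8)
The plan is simply to assemble the pieces already laid out: the reduction $\size{\mathcal{D}_k(w)} = C(n,k+1)$ from point (i), the stars-and-bars count $C(n,k+1,p) = \binom{n-(k+1)p+p-1}{p-1}$, and the Harris--Styles identity. First I would substitute $k\mapsto k+1$ into the closed form for $C(n,k)$ derived just above the statement, rewriting
$$\size{\mathcal{D}_k(w)} = C(n,k+1) = \sum_{j=0}^{\lfloor\frac{n-(k+1)}{k+1}\rfloor}\binom{(n-(k+1)) - (k+1)j + j}{j}.$$

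Next I would invoke $\sum_{p=0}^{\lfloor N/c\rfloor}\binom{N-pc+p}{p}=G_N^c$ with the specialisation $c=k+1$ and $N=n-(k+1)$. With these values the summand $\binom{N-pc+p}{p}$ becomes exactly $\binom{(n-(k+1))-(k+1)j+j}{j}$ after renaming $p=j$, and the upper limit $\lfloor N/c\rfloor$ becomes $\lfloor\frac{n-(k+1)}{k+1}\rfloor$; the two sums therefore agree term for term, which gives $\size{\mathcal{D}_k(w)} = G_{n-(k+1)}^{k+1}$ and closes the argument.

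Since the combinatorial heavy lifting has already been done before the statement, the only delicate points --- the ``main obstacle'', such as it is --- are the off-by-one bookkeeping and the range of validity. I would double-check that $\lfloor n/(k+1)\rfloor - 1 = \lfloor\frac{n-(k+1)}{k+1}\rfloor$, which is what makes the reindexing $j = p-1$ line the two summation ranges up exactly, and I would confirm that the hypothesis $n\geq k+1$ is precisely what guarantees $N = n-(k+1)\geq 0$, so that Harris--Styles applies. In the boundary case $n = k+1$ this yields $\size{\mathcal{D}_k(w)} = G_0^{k+1} = 1$, consistent with the fact that a length-$(k+1)$ string admits only the single $k$-restricted decomposition $\lbrace\!\lbrace w\rbrace\!\rbrace$.
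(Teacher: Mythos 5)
Your proposal is correct and follows exactly the paper's route: the identity $\size{\mathcal{D}_k(w)} = C(n,k+1)$, the stars-and-bars formula, the reindexing $j=p-1$, and the Harris--Styles identity specialised to $c=k+1$, $N=n-(k+1)$. The bookkeeping you verify (the summation bounds matching and the boundary case $n=k+1$) is sound, so nothing is missing.
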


Let $r_1,\dots,r_e$ be the (distinct) complex roots of $X^c - X^{c-1}-1$. Then, there exists complex polynomials $P_1,\dots,P_e$ and a sequence $z_n$, which is zero for $n\geq c$, so that
$$G_n^c = z_n + P_1(n)\cdot r_1^n + \dots + P_e(n)\cdot r_e^n \quad\cite{brousseau1971linear}.$$

Note that despite $P_1,\dots,P_e$ and $r_1,\dots,r_e$ being complex polynomials and roots, the above formula does indeed yield an integer. To provide an asymptotic behaviour for $G_n^c$, we need the following result.

\begin{lemma}
There exists a complex root $r$ of $X^c - X^{c-1}-1$ so that $|r|>1$.
\end{lemma}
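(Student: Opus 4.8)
The plan is to show that the polynomial $X^c - X^{c-1} - 1$ has a real root larger than $1$; since a real root is in particular a complex root, this suffices. Set $f(X) = X^c - X^{c-1} - 1$. First I would evaluate $f$ at a few convenient points: we have $f(1) = 1 - 1 - 1 = -1 < 0$, and $f(2) = 2^c - 2^{c-1} - 1 = 2^{c-1} - 1 \geq 0$ for all $c \geq 1$ (with equality only at $c=1$, where the root is exactly $2$). More cleanly, $f(X) \to +\infty$ as $X \to +\infty$, so $f$ takes positive values beyond $1$. Since $f$ is a polynomial, hence continuous on $\mathbb{R}$, the intermediate value theorem applied on the interval $[1, +\infty)$ (or $[1,2]$ for $c \geq 2$) yields a real number $r > 1$ with $f(r) = 0$.

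There is essentially no obstacle here; the only mild subtlety is the degenerate case $c = 1$, where $X^c - X^{c-1} - 1 = X - 1 - 1 = X - 2$ has the single root $r = 2 > 1$, which is still fine, and in the paper's application $c = k+1 \geq 2$ anyway. If one wants the root to be strictly inside an interval rather than possibly at an endpoint, note $f(1) = -1 < 0$ strictly, so the root produced lies strictly above $1$ regardless.

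For completeness — and because it is implicitly needed for the asymptotic equivalent claimed in Theorem~\ref{thm:cardinal} to be meaningful — I would also remark that this $r$ is the \emph{dominant} root, i.e.\ $|r_i| \leq r$ for every complex root $r_i$, with the other roots of strictly smaller modulus when $c \geq 2$. This can be seen by a standard argument: if $z$ is any root with $|z| \geq 1$, then from $z^c = z^{c-1} + 1$ we get $|z|^c \leq |z|^{c-1} + 1$, so $|z|$ satisfies $g(|z|) \leq 0$ where $g(t) = t^c - t^{c-1} - 1$; since $g$ is strictly increasing on $[1,\infty)$ (its derivative $t^{c-2}(ct - (c-1))$ is positive there) and $g(r) = 0$, this forces $|z| \leq r$. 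Equality $|z| = r$ then requires equality in the triangle inequality $|z^{c-1} + 1| \leq |z|^{c-1} + 1$, which for $c \geq 2$ forces $z^{c-1}$ to be a positive real, hence $z = r$. Thus $r$ is simple and strictly dominant, which is exactly what is required to extract $|P(n)| \cdot r^n$ as the asymptotic equivalent in Proposition~\ref{prop:compositions_cardinal} and Theorem~\ref{thm:cardinal}.
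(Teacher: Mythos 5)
Your proof is correct, but it takes a genuinely different route from the paper. The paper argues via the Mahler measure: since $X^c - X^{c-1} - 1$ is not reciprocal, Smyth's theorem gives $\mathcal{M}(X^c - X^{c-1} - 1) \geq \mathcal{M}(X^3 - X - 1) \approx 1.3247 > 1$, so some root must lie outside the unit circle. You instead apply the intermediate value theorem to $f(X)=X^c-X^{c-1}-1$ on $[1,2]$, using $f(1)=-1<0$ and $f(2)=2^{c-1}-1>0$ for $c\geq 2$ (with the degenerate case $c=1$ handled separately, though in the paper $c=k+1\geq 2$ anyway). This is more elementary and self-contained: it avoids citing a nontrivial external theorem, and it yields strictly more information, namely that the dominant root is real and lies in $(1,2)$. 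Your closing remark is also well taken: the triangle-inequality argument showing that this real root is the \emph{strictly} dominant simple root (any root $z$ with $|z|=r$ forces $z^{c-1}=r^{c-1}$ and $z^c=r^c$, hence $z=r$; simplicity follows from $f'(r)>0$) is exactly what is implicitly needed for the paper's claim $G_n^c \sim |P_1(n)|\cdot|r_1|^n$ to be justified, a point the paper glosses over by merely selecting ``the root of maximum modulus.'' The only cosmetic gap is that ``$z^{c-1}$ is a positive real, hence $z=r$'' skips a small step (combine $z^{c-1}=r^{c-1}$ with $z^c=z^{c-1}+1=r^c$ and divide), but this is immediately recoverable. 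In short: correct, more elementary than the paper's Mahler-measure argument, and strictly stronger in its conclusions.
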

\begin{proof}
The Mahler measure of a polynomial $P(X) = a \cdot (X-r_1)\cdots (X-r_c)$ is defined as $\mathcal{M}(P) = |a|\cdot \prod_{i=1}^c \max(1,|r_i|)$. To prove our result, it is sufficient to prove that $\mathcal{M}(X^c - X^{c-1} -1) > 1$ -- since $a=1$ in our case. Smyth \cite{smyth2007mahler} proved that if $P$ is not reciprocal (i.e. $P(X) \neq X^c P(1/X)$) then $\mathcal{M}(P) \geq M(X^3-X-1) \approx 1.3247$. Since $X^c-X^{c-1}-1$ is not reciprocal, the conclusion holds.\end{proof}

Without loss of generality, suppose $r_1$ is the complex root of $X^c - X^{c-1}-1$ of maximum modulus -- with $|r_1|>1$ by the previous lemma. Then, when $n\to\infty$, we have $G_n^c \sim |P_1(n)|\cdot |r_1|^n$. To finish the proof of Theorem~\ref{thm:cardinal}, we use Proposition~\ref{prop:compositions_cardinal} to obtain
$$\size{\mathcal{D}_k(w)} \sim |P(n-(k+1))|\cdot |r|^{n-(k+1)}$$
where $P$ and $r$ correspond to the aforementioned polynomial $P_1$ and root $r_1$ when $c=k+1$.

% \begin{remark}
% Note that since $c$-restricted decompositions also contain $(c+1)$-restricted decompositions, it follows that $G_n^{c+1}\leq G_n^c$ for all $n$, and that the root of maximum modulus strictly decrease when $c$ increase. In other words, whenever $c$ increase, there is exponentially less possible $c$-restricted decompositions -- which is consistent with intuition.
% \end{remark}

\section{On the best $k$-restricted decomposition}\label{sec:special_linear_partition}

The goal of this section is to prove Theorem~\ref{thm:best}, that is, for any string $w$, and any integer $k\geq 1$, $\displaystyle\min_{W\in\mathcal{D}_k(w)}\rho(W) \leq \sigma^{k+1}+4k+2.$

\subsection{An important property of the eBWT}

\begin{proposition}\label{prop:remove_one_occurrence}
Let $A$ be a multiset of strings, and let $w\in A$ be some string with multiplicity $m\geq 1$. Let $B$ be the multiset of strings obtained from $A$ by removing one occurrence of $w$. Then
\begin{enumerate}
    \item $\rho(A)=\rho(B)$ if $m\geq 2$,
    \item $0\leq \rho(A) - \rho(B)\leq 2\cdot \size{w}$ otherwise.
\end{enumerate}
\end{proposition}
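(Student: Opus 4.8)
The plan is to reduce everything to the effect of a single-letter deletion on the number of runs, and then to bound how many such deletions are hidden in ``removing one occurrence of $w$''.

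\textbf{Step 1 (a local lemma on runs).} First I would record the elementary fact that deleting one position from a string never increases $\texttt{runs}$ and decreases it by at most $2$: if $s'$ is obtained from $s$ by deleting the letter $s_i$ with $1<i<\size{s}$, then $\texttt{runs}(s)-\texttt{runs}(s') = \mathds{1}_{s_{i-1}\neq s_i}+\mathds{1}_{s_i\neq s_{i+1}}-\mathds{1}_{s_{i-1}\neq s_{i+1}}\in\{0,1,2\}$, and this quantity is $0$ as soon as $s_i$ has a neighbour equal to it (in particular whenever $s_i$ lies inside a run of length $\geq 2$); for $i\in\{1,\size{s}\}$ the difference lies in $\{0,1\}$. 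This is a one-line case check on the three indicators, using that $s_{i-1}=s_i=s_{i+1}$ forces $s_{i-1}=s_{i+1}$.

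\textbf{Step 2 (removing $w$ deletes exactly $\size{w}$ positions of the eBWT).} Let $R(S)$ denote the multiset of all cyclic rotations of all strings of a multiset $S$; then $R(A)=R(B)\uplus R(\lbrace\!\lbrace w\rbrace\!\rbrace)$ with $\size{R(\lbrace\!\lbrace w\rbrace\!\rbrace)}=\size{w}$, and $\texttt{ebwt}(\cdot)$ is the word of last letters of the elements of $R(\cdot)$ listed in $\prec_{\omega}$-order. I would first observe that two \emph{distinct} strings are never $\prec_{\omega}$-equal: if $u^\omega=v^\omega$ then $\text{root}(u)=\text{root}(v)$, so clause (i) of the definition orders them strictly unless $u=v$. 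Hence the eBWT matrix is canonically ordered up to permuting rows carrying identical strings, and rows with a common content form consecutive blocks. Passing from $A$ to $B$ simply removes one occurrence of each element of $R(\lbrace\!\lbrace w\rbrace\!\rbrace)$ from these blocks, leaving the relative order and the last letter of every surviving row untouched; therefore $\texttt{ebwt}(B)$ is obtained from $\texttt{ebwt}(A)$ by deleting exactly $\size{w}$ positions.

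\textbf{Step 3 (conclusion).} Deleting those $\size{w}$ positions one at a time and applying Step 1 at each stage yields $0\leq\rho(A)-\rho(B)\leq 2\size{w}$, which is item~2. For item~1, when $m\geq 2$ every cyclic rotation $r$ of $w$ still occurs in $R(B)$, since $B$ retains $m-1\geq 1$ copies of $w$; thus the block of rows equal to $r$ in the matrix of $A$ has size $\geq 2$. I would then carry out the $\size{w}$ deletions in an order that, at each step, removes a row from a block still containing at least two members (always possible, as each block is only shrunk down to size $\geq 1$); by the second half of Step 1 each such deletion leaves $\texttt{runs}$ unchanged, so $\rho(A)=\rho(B)$.

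\textbf{Main obstacle.} The only delicate point is Step 2: one must argue cleanly that ``remove one occurrence of $w$'' really amounts to ``delete $\size{w}$ prescribed positions of the eBWT word''. This needs the remark that $\prec_{\omega}$-ties occur only between identical strings, together with careful bookkeeping of the consecutive blocks of equal rotations — a block can exceed size one either because $w$ is a proper power or because other strings of $A$ happen to share a rotation with $w$. Once Step 2 is in place, everything else is the accounting in Steps 1 and 3.
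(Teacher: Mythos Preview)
Your proof is correct and follows essentially the same approach as the paper: both reduce to the local effect of a single-letter deletion on $\texttt{runs}$, observe that removing $w$ deletes exactly $\size{w}$ rows of the eBWT matrix without disturbing the remaining ones, and for item~1 exploit that each rotation of $w$ sits in a block of size $\geq 2$ so the deleted letter has an equal neighbour. Your treatment is in fact slightly more careful than the paper's on two points---the handling of $\prec_{\omega}$-ties and the explicit ordering of deletions when $w$ is periodic---but these are refinements of the same argument rather than a different route.
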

\begin{proof}
(1) Suppose first that $m\geq 2$. Therefore, after removing one occurrence of $w$ from $A$ to obtain $B$, there remains at least one occurrence of $w$ in $B$, say $w'$. In the matrix of the eBWT, all circular rotations of $w$ and $w'$, since they are identical, will be grouped together; and their last letters will be consecutive, and equal, in the eBWT. Therefore, removing $w$ from $A$ will eliminate consecutives duplicates of letters, and the number of runs will remain unchanged.

(2) Now, suppose that $m=1$. In $\texttt{ebwt}(A)$, there are $\size{w}$ letters corresponding to $w$. Removing the circular rotations of $w$ from the eBWT matrix of $A$ leads to the eBWT matrix of $B$, and, importantly, does not modify the relative order of the circular rotations of the remaining strings. It remains to quantify the impact on the number of runs when a single row is removed from the eBWT matrix of $W$. In the worst case, all circular rotations of $w$ are sandwiched between circular rotations of other strings. For each of these sandwiches, the eBWT is locally modified from $\cdots abc\cdots $ to $\cdots ac\cdots $ when removing the letter $b$. The number of associated runs before removing $b$ is equal to $\mathds{1}_{a\neq b}+\mathds{1}_{b\neq c}$, whereas after removal it is equal to $\mathds{1}_{a\neq c}$. If $a=c$, then the number of runs in $A$ is either $2$ (if $b\neq a$) or $0$ (if $b=a$), and $0$ in $B$. If $a\neq c$, the number of runs in $A$ is either $2$ (if $a\neq b\neq c$) or $1$ (if $a=b\neq c$ or 
$a\neq b=c$) and $1$ in $B$. Eitherway, the  number of runs can only \emph{decrease}, therefore $\rho(A)\geq \rho(B)$, and by at most $2$. Since this occurs, in the worst case, for each letter of $w$, we indeed have $0\leq \rho(A)-\rho(B) \leq 2\cdot \size{w}$.
\end{proof}

From Proposition~\ref{prop:remove_one_occurrence}, we immediately conclude the two following results.

\begin{corollary}\label{corollary:remove_duplicates}
Let $A$ be a multiset of strings, and $B$ the associated set (without duplicates). Then $\rho(A)=\rho(B)$.
\end{corollary}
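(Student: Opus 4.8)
The plan is to obtain this as an immediate iteration of item~(1) of Proposition~\ref{prop:remove_one_occurrence}. Concretely, I would argue by induction on the quantity $d(A) = \size{A} - \size{B}$, i.e.\ the total number of strings in $A$ (counted with multiplicity) minus the number of \emph{distinct} strings in $A$, which measures how far $A$ is from being duplicate-free. If $d(A) = 0$, then $A$ already contains no duplicates, so $A = B$ and $\rho(A) = \rho(B)$ trivially.

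For the inductive step, suppose $d(A) \geq 1$. Then there exists a string $w$ occurring in $A$ with multiplicity $m \geq 2$. Let $A'$ be the multiset obtained from $A$ by deleting one occurrence of $w$. Because $m \geq 2$, the hypothesis of item~(1) of Proposition~\ref{prop:remove_one_occurrence} is satisfied, so $\rho(A) = \rho(A')$. Crucially, deleting a duplicated copy does not change the underlying set: $A'$ has exactly the same distinct elements as $A$, hence the same associated set $B$, while $d(A') = d(A) - 1$. Applying the induction hypothesis to $A'$ yields $\rho(A') = \rho(B)$, and therefore $\rho(A) = \rho(B)$.

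The only point requiring (minimal) care is to ensure that item~(1) — and not item~(2) — of Proposition~\ref{prop:remove_one_occurrence} applies at each step: this is guaranteed because at every stage we remove an occurrence of a string whose current multiplicity is at least $2$, so a copy always survives, which is precisely the hypothesis of item~(1). I do not anticipate any real obstacle; the whole content of the corollary is packed into Proposition~\ref{prop:remove_one_occurrence}, and this statement is just its finite iteration.
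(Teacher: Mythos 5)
Your proof is correct and is exactly the paper's argument: the paper's proof simply says to apply item~(1) of Proposition~\ref{prop:remove_one_occurrence} repeatedly until all duplicates are gone, and your induction on $\size{A}-\size{B}$ is just a careful formalization of that iteration, including the observation that a copy always survives so item~(1) applies at each step.
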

\begin{proof}
Apply repeatedly item (1) of Proposition~\ref{prop:remove_one_occurrence} until all duplicates are gone.
\end{proof}

\begin{corollary}\label{corollary:inclusion}
Let $A,B$ be two sets of strings with $B\subseteq A$; then $\rho(B)\leq \rho(A)$.  
\end{corollary}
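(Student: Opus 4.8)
The plan is to obtain Corollary~\ref{corollary:inclusion} as a direct telescoping consequence of item~(2) of Proposition~\ref{prop:remove_one_occurrence}. Since $B\subseteq A$ and $A$ is a set, the difference $A\setminus B$ is a finite set of strings, say $A\setminus B=\{u_1,\dots,u_\ell\}$. I would build a descending chain of sets $A=A_0\supseteq A_1\supseteq\cdots\supseteq A_\ell=B$ by setting $A_j=A_{j-1}\setminus\{u_j\}$, i.e.\ removing one string at a time until only $B$ remains.

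The key observation is that each intermediate $A_{j-1}$ is itself duplicate-free, being a subset of the set $A$; hence the string $u_j$ has multiplicity exactly $1$ in $A_{j-1}$, so it is item~(2) of Proposition~\ref{prop:remove_one_occurrence} that applies at every step (never item~(1)), giving $\rho(A_{j-1})\geq\rho(A_j)$. Chaining these $\ell$ inequalities yields $\rho(A)=\rho(A_0)\geq\rho(A_\ell)=\rho(B)$, which is the claim.

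There is essentially no obstacle: the only point requiring a word of justification is that removals stay within the "multiplicity one" regime so that the $\rho$ value can only decrease, and this is immediate from $A$ (hence every $A_j$) being a set. If one wanted the sharper quantitative statement one could also track $\rho(A)-\rho(B)\leq 2\sum_{j=1}^\ell\size{u_j}$ from the same argument, but for the inequality $\rho(B)\leq\rho(A)$ the monotonicity at each step suffices.
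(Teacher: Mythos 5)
Your proposal is correct and matches the paper's argument: the paper likewise obtains $B$ from $A$ by removing the single occurrence of each element of $A\setminus B$ and applies item~(2) of Proposition~\ref{prop:remove_one_occurrence}, merely stating in one sentence the telescoping you spell out explicitly. Your added remark that each intermediate multiset remains duplicate-free (so item~(2) always applies) is the right justification and is implicit in the paper's proof.
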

\begin{proof}
Since $B$ can be obtained from $A$ by removing the only occurrence of each element of $A\setminus B$, we apply item (2) of Proposition~\ref{prop:remove_one_occurrence} to get $\rho(A)-\rho(B)\geq 0$.
\end{proof}

\subsection{Proof of Theorem~\ref{thm:best}}

We start by the following result.
\vspace{-\baselineskip}
\begin{lemma}\label{lemma:all_strings}
For any $p\geq 1$, $\texttt{ebwt}(\Sigma^p) = \overbrace{(a_1^p\cdots a_\sigma^p)\cdots (a_1^p\cdots a_\sigma^p)}^{\sigma^{p-1} \text{ times}}$. It follows that $\rho(\Sigma^p)=\sigma^p$.
\end{lemma}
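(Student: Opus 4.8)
The plan is to compute $\texttt{ebwt}(\Sigma^p)$ directly from the definition. The eBWT matrix of $\Sigma^p$ is obtained by listing, in ascending $\omega$-order, all circular rotations of all strings of $\Sigma^p$, so the first step is to identify this multiset of rotations. Consider the map sending a pair $(s,i)\in\Sigma^p\times\{0,\dots,p-1\}$ to the $i$-th circular rotation of $s$. For a fixed target $t\in\Sigma^p$ and a fixed shift $i$, there is exactly one $s$ whose $i$-th rotation is $t$ (namely $t$ rotated by $p-i$), so every fiber of this map has size exactly $p$. Hence the multiset of all circular rotations of the strings of $\Sigma^p$ is precisely $\Sigma^p$ with every element taken with multiplicity $p$. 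It is essential here that circular rotations are counted per shift, so that periodic strings still contribute $p$ rotations just like primitive ones.

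Next, since every string involved has the same length $p$, the $\omega$-order on these rotations coincides with $\prec_{\text{lex}}$, as recalled in Section~\ref{sec:definition}. Consequently, sorting the multiset above amounts to listing the $\sigma^p$ strings of $\Sigma^p$ in lexicographic order, each appearing in a consecutive block of $p$ identical copies. Identifying $\Sigma^p$ in lexicographic order with counting in base $\sigma$ over the digits $a_1<\dots<a_\sigma$, the $j$-th string (for $0\le j<\sigma^p$) is the length-$p$ base-$\sigma$ expansion of $j$, whose last letter is $a_{(j\bmod\sigma)+1}$.

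Reading off the last column $L$ then gives the result: $L$ is the concatenation, for $j=0,\dots,\sigma^p-1$, of the block $a_{(j\bmod\sigma)+1}^{p}$. As $j$ runs through a window of $\sigma$ consecutive values, the last letter cycles through $a_1,a_2,\dots,a_\sigma$, so $L$ equals $(a_1^p a_2^p\cdots a_\sigma^p)$ repeated $\sigma^p/\sigma=\sigma^{p-1}$ times, which is the claimed form. Finally, $\rho(\Sigma^p)=\texttt{runs}(L)$: each of the $\sigma^{p-1}$ blocks $a_1^p\cdots a_\sigma^p$ is an alternation of $\sigma$ maximal runs of distinct letters, and consecutive blocks do not merge since the last letter $a_\sigma$ of one block differs from the first letter $a_1$ of the next; hence $L$ has $\sigma\cdot\sigma^{p-1}=\sigma^p$ runs.

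The only genuinely delicate point is the multiplicity bookkeeping in the first step — making sure the $p$ rotations of each string are counted per shift rather than per distinct string, so that the fiber-size argument yields multiplicity exactly $p$ uniformly, including for periodic strings. Everything after that is routine: the $\omega$-order never needs its unequal-length clause here, and extracting $L$ from the sorted list and counting its runs is a direct computation.
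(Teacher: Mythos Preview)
Your proof is correct and follows essentially the same approach as the paper: both first observe that the eBWT matrix consists of $p$ consecutive copies of every string in $\Sigma^p$ in lexicographic order, and then read off the last column by grouping rows according to the length-$(p-1)$ prefix (your base-$\sigma$ argument with $j\bmod\sigma$ is exactly this grouping). Your explicit fiber-size justification for the multiplicity-$p$ claim, including the care about periodic strings, fills in a detail the paper leaves implicit, but the overall strategy is the same.
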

\begin{proof}
Since we are calculating the eBWT of all the strings in $\Sigma^p$, the matrix of the eBWT, containing all of their circular rotations, is made up of $p$ consecutive copies of each of the $\sigma^p$ strings in $\Sigma^p$.

Fix a string $w$ of $\Sigma^{p-1}$. In the eBWT matrix, we find $p$ times the string $wa_1$, followed by $p$ times the string $wa_2$, and so on up to $p$ times the string $wa_\sigma$. Therefore the string $w$ contributes, in the last column of the matrix, to the sequence of letters $a_1^p\cdots a_\sigma^p$. Since there are $\sigma^{p-1}$ strings in $\Sigma^{p-1}$, the claim holds. Computing the number of runs is straightforward.
\end{proof}

The next result then follows naturally. 
\begin{corollary}\label{cor:multiset_equal_size}
Let $A=\lbrace\!\lbrace w_1,w_2,\dots\rbrace\!\rbrace$ be a multiset of strings with $\forall i, |w_i|=p$; then $\rho(A) \leq \sigma^{p}$.
\end{corollary}
\begin{proof}
We start by removing duplicates from $A$, obtaining the set $B=\lbrace w_1,w_2,\dots\rbrace \subseteq \Sigma^p$. We have $\rho(A)=\rho(B)$  using Corollary~\ref{corollary:remove_duplicates} and $\rho(B)\leq \sigma^p$ using Corollary~\ref{corollary:inclusion} and Lemma~\ref{lemma:all_strings}.
\end{proof}

We now introduce the principal contribution of this section.

\begin{proposition}\label{prop:linear_euclidean_partition}
Let $p\geq1$ and $w=a_1\cdots a_n\in \Sigma^n$ be a string, with $n= pq+r$ and $0\leq r <p$. Let $A= \lbrace\!\lbrace w_1,\dots,w_q\rbrace\!\rbrace \in \mathcal{D}_{p-1}(w)$ where
$$\begin{cases}
w_i &= a_{(i-1)p+1}\cdots a_{ip}  \text{ for } 1\leq i \leq q-1\\
w_{q} &=a_{(q-1)p} \cdots a_{pq}\cdots a_{pq+r}
\end{cases},$$
then $\rho(A) \leq \sigma^{p} + 2(p+r)$.
\end{proposition}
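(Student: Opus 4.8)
The plan is to reduce the decomposition $A$ in Proposition~\ref{prop:linear_euclidean_partition} to the equal-length case handled by Corollary~\ref{cor:multiset_equal_size}, and then control the error introduced by the single irregular block $w_q$. Concretely, $A$ consists of $q-1$ strings of length exactly $p$, namely $w_1,\dots,w_{q-1}$, together with one string $w_q$ of length $p+r$ (it contains the last $r$ letters of $w$ together with the final full block). So I would write $A = \lbrace\!\lbrace w_1,\dots,w_{q-1}\rbrace\!\rbrace \cup \lbrace\!\lbrace w_q\rbrace\!\rbrace$ and bound $\rho(A)$ by first bounding $\rho(\lbrace\!\lbrace w_1,\dots,w_{q-1}\rbrace\!\rbrace)$ and then accounting for the addition of $w_q$.

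First I would apply Corollary~\ref{cor:multiset_equal_size} to the sub-multiset $\lbrace\!\lbrace w_1,\dots,w_{q-1}\rbrace\!\rbrace$, all of whose members have length $p$; this gives $\rho(\lbrace\!\lbrace w_1,\dots,w_{q-1}\rbrace\!\rbrace)\leq \sigma^p$. Next, I would invoke Proposition~\ref{prop:remove_one_occurrence}: passing from $A$ to $A$ with one occurrence of $w_q$ removed changes $\rho$ by at most $2\cdot\size{w_q} = 2(p+r)$ (item (2) if $w_q$ is unique in $A$; if $w_q$ happens to coincide with some $w_i$, item (1) says there is no change and the bound holds trivially). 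Hence
$$\rho(A) \;\leq\; \rho(\lbrace\!\lbrace w_1,\dots,w_{q-1}\rbrace\!\rbrace) + 2(p+r) \;\leq\; \sigma^p + 2(p+r),$$
which is exactly the claimed inequality. One should also check the degenerate cases: if $q=1$ then $A=\lbrace\!\lbrace w_1\rbrace\!\rbrace$ with $\size{w_1}=p+r\leq 2p$, and since $\rho$ of a single string of length $\ell$ is at most $\ell$, the bound $\sigma^p+2(p+r)$ still holds; and the membership $A\in\mathcal{D}_{p-1}(w)$ is immediate since every block has length $\geq p > p-1$.

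The main obstacle is making precise the claim that adjoining $w_q$ to a multiset changes the number of runs by at most $2\size{w_q}$ in a way that genuinely reuses Proposition~\ref{prop:remove_one_occurrence}. The subtlety is that Proposition~\ref{prop:remove_one_occurrence} is stated for \emph{removing} an occurrence from a multiset $A$ to get $B$, with the bound $0\leq \rho(A)-\rho(B)\leq 2\size{w}$; applied here with $A$ our full multiset and $B = \lbrace\!\lbrace w_1,\dots,w_{q-1}\rbrace\!\rbrace$ (and $w=w_q$), it yields directly $\rho(A)\leq \rho(B)+2(p+r)$, so in fact no extra work is needed — one just has to orient the inequality correctly. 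The only place demanding a little care is the edge case where $w_q$ is not unique in $A$ (then $m\geq 2$ and item (1) applies, giving equality rather than the looser bound), but this only strengthens the conclusion. Thus the proof is essentially a two-line combination of Corollary~\ref{cor:multiset_equal_size} and Proposition~\ref{prop:remove_one_occurrence}, with routine verification of the boundary cases $q=1$ and $r=0$.
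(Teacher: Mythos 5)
Your proof is correct and follows essentially the same route as the paper: bound $\rho(\lbrace\!\lbrace w_1,\dots,w_{q-1}\rbrace\!\rbrace)\leq\sigma^p$ via Corollary~\ref{cor:multiset_equal_size}, then absorb $w_q$ using Proposition~\ref{prop:remove_one_occurrence} to pay at most $2\size{w_q}=2(p+r)$. Your explicit treatment of the edge cases ($w_q$ duplicating a full block when $r=0$, and $q=1$) is a welcome refinement that the paper's own proof passes over silently.
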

\begin{proof}
First, let $B=\lbrace\!\lbrace w_1,\dots,w_{q-1}\rbrace\!\rbrace$. Using Corollary~\ref{cor:multiset_equal_size}, we have $\rho(B)\leq \sigma^p$ since $|w_i|=p$ for $1\leq i\leq q$. Since $B$ is obtained from $A$ by removing the only occurrence in $A$ of $w_{q}$, we apply item (2) of Proposition~\ref{prop:remove_one_occurrence} to get
$\rho(A)-\rho(B)\leq 2\cdot \size{w_{q}} = 2(p+r)$.\end{proof}

With regard to the proof of Theorem~\ref{thm:best}, we derive that, with $p=k+1$, for any string $w$, since $A\in \mathcal{D}_k(w)$ and $r\leq k$, $\displaystyle\min_{W\in\mathcal{D}_k(w)} \rho(W)\leq \sigma^{k+1}+4k+2$.

\section{On the antecedents of $(ba)^n$ with the eBWT}\label{sec:antecedents_ba}

Let $n\geq 1$ be some integer. We consider in this section the multiset of strings $W(n)=\lbrace\!\lbrace w_1,w_2,\dots \rbrace\!\rbrace$ on the binary alphabet $\Sigma=\lbrace a,b\rbrace$ so that $\texttt{ebwt}(W(n))=(ba)^n$. Note that $W(n)$ is well defined and exists for any value of $n$, and that $\sum_{w\in W(n)} |w|=2n$. Moreover, $\rho(W(n)) = \texttt{runs}((ba)^n)=  2n-1$.

More precisely, for $k\geq 1$ fixed, we are interested in characterizing the values of $n$ for which the strings composing $W(n)$ are all of length at least $k+1$, i.e. so that $\min_{w\in W(n)} |w| > k$. In this section, we prove the following result.

\begin{theorem}\label{thm:infinite_n}
For any $k\geq 1$, there are infinitely many values of $n\geq 1$ for which $\displaystyle\min_{w\in W(n)} |w|> k.$
\end{theorem}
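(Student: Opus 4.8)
The plan is to analyze the structure of $W(n)$, the (unique) multiset whose eBWT equals $(ba)^n$, directly via the inversion procedure described after Proposition~\ref{prop:inverting_ebwt}. With $F = a^n b^n$ and $L = (ba)^n$, we have $L[2i-1]=b$ and $L[2i]=a$ for $1\le i\le n$, so the $i$-th $b$ in $L$ sits at position $2i-1$ and the $i$-th $a$ at position $2i$. The eBWT matrix has $F[2i-1]$ being the $(2i-1)$-th letter of $a^nb^n$ and $F[2i]$ the $(2i)$-th. Following the cycle-chasing rule (go from $L[j]$ to $F[j]$, reidentify that letter's occurrence-rank in $L$, repeat), each cycle corresponds to one string $w_i\in W(n)$ up to rotation, and its length is the cycle length. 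So $\min_{w\in W(n)}|w|>k$ is exactly the statement that the permutation $\pi$ on $\{1,\dots,2n\}$ induced by this chase has all cycles of length $>k$.

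First I would write down $\pi$ explicitly. Reading position $j$: $L[j]$ is $b$ if $j$ is odd, $a$ if $j$ is even; its occurrence-rank is $\lceil j/2\rceil$. By Proposition~\ref{prop:inverting_ebwt}(1), $F[j]$ cyclically precedes nothing — rather, $F[j]$ follows $L[j]$; and by (2) the $r$-th occurrence of a letter in $F$ equals the $r$-th in $L$. Since $F=a^nb^n$, the $r$-th $a$ in $F$ is at position $r$ and the $r$-th $b$ at position $n+r$. Chaining these maps, I expect $\pi$ to reduce to an explicit affine-type map modulo $2n$ (something like $j\mapsto 2j \bmod (2n+1)$ or a closely related Josephus-style permutation — this is the computation to carry out carefully). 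Once $\pi$ is in closed form, the cycle lengths are governed by the multiplicative order of a fixed base (likely $2$) modulo divisors of a quantity like $2n+1$, or by an analogous orbit-counting statement.

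The key step is then a number-theoretic one: show that for infinitely many $n$, every cycle of $\pi$ has length $>k$. Once $\pi$ is identified as (conjugate to) multiplication by $2$ modulo $m$ for an appropriate $m=m(n)$, the cycle containing residue $d\mid m$ has length $\mathrm{ord}_{m/\gcd(m,d)}(2)$, so the minimal cycle length is $\min_{1<e\mid m}\mathrm{ord}_e(2)$ together with the fixed points/short orbits coming from small divisors. To force all orbits long, I would choose $m$ prime (e.g. take $n$ so that $2n+1$ is prime), so that the only divisors are $1$ and $m$; then the single nontrivial orbit has length $\mathrm{ord}_m(2)$, which is $\ge$ the smallest prime factor of $m-1$ once $2$ is a primitive root, and in any case one can make $\mathrm{ord}_m(2)$ arbitrarily large by choosing $m$ large with $2$ not of small order — for instance pick $m$ prime with $m\equiv 3\ or\ 5 \pmod 8$ so $\mathrm{ord}_m(2)$ is even and grows, or simply invoke that $\{\mathrm{ord}_m(2): m\ \text{prime}\}$ is unbounded, which is elementary (if $\mathrm{ord}_m(2)\le k$ then $m\mid 2^{k!}-1$, and only finitely many primes divide a fixed integer). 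I also need to handle the fixed point(s) of $\pi$ — there will typically be one cycle of length $1$ or $2$ coming from the all-$a$ or all-$b$ string, i.e. from the trivial divisor; I would show these correspond to $w=a$ or $w=b$ appearing in $W(n)$ only when a specific divisibility holds, and exclude those $n$. Alternatively, absorb that case by a slightly different choice of $m$.

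The main obstacle I anticipate is pinning down $\pi$ in exact closed form and correctly tracking the off-by-one coming from $F=a^nb^n$ versus $L=(ba)^n$ (whether the relevant modulus is $2n$, $2n+1$, or $2n-1$, and whether the multiplier is $2$ or its inverse). Everything downstream — reducing "all cycles long" to "$\mathrm{ord}_m(2)$ large for suitable $m$" and then exhibiting infinitely many such $n$ — is then standard elementary number theory. So the real work is the bookkeeping in the first step; the infinitude conclusion follows because the orders $\mathrm{ord}_m(2)$ over primes $m$ are unbounded, while each residual short-orbit obstruction eliminates at most finitely many $n$.
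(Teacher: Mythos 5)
Your plan is correct, and the one computation you deferred comes out exactly as you guessed. Number the rows of the eBWT matrix $1,\dots,2n$; as recorded in the paper, $a_i$ sits at position $i$ in $F$ and $2i$ in $L$, while $b_i$ sits at position $n+i$ in $F$ and $2i-1$ in $L$. Hence the chase ``row $j$ $\to$ row where $F[j]$ occurs in $L$'' is $\pi(j)=2j$ for $j\le n$ and $\pi(j)=2(j-n)-1=2j-(2n+1)$ for $j>n$, i.e.
\begin{equation*}
\pi(j)\equiv 2j \pmod{2n+1},\qquad j\in\{1,\dots,2n\},
\end{equation*}
so the strings of $W(n)$ are exactly the orbits of multiplication by $2$ on the nonzero residues modulo $2n+1$ (chasing in the $L$-to-$F$ direction is multiplication by $2^{-1}$; cycle lengths are unaffected). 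Consequently $\min_{w\in W(n)}\size{w}=\min_{1<e\mid 2n+1}\operatorname{ord}_e(2)$, and your worry about spurious short ``trivial'' cycles evaporates: $2d\equiv d$ forces $d\equiv 0$, so there are no $1$-cycles for any $n$. The theorem then follows at once: every divisor $e>1$ of $2n+1$ satisfies $\operatorname{ord}_e(2)>k$ as soon as $\gcd\bigl(2n+1,\prod_{j=2}^{k}(2^j-1)\bigr)=1$, which holds for all $n\equiv 0\pmod{\prod_{j=2}^{k}(2^j-1)}$, or whenever $2n+1$ is a prime not dividing $\prod_{j=2}^{k}(2^j-1)$ --- only finitely many primes are excluded, so infinitely many $n$ qualify. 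As a sanity check, the closed form also says $\size{W(n)}=1$ exactly when $2n+1$ is prime with $2$ a primitive root modulo $2n+1$, i.e. $n=1,2,5,6,9,\dots$, whose string lengths $2n=2,4,10,12,18,\dots$ are the values listed in Appendix~\ref{app:artin_conjecture} (so the criterion there is naturally read on $2n+1$ rather than $n+1$).

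This is a genuinely different route from the paper's. The paper never identifies the inversion permutation globally; instead, for each length $k'\le k$ it parametrizes a hypothetical cycle by the letter pattern $\mathbf{t}$ and index vector $\mathbf{i}$, obtains the circulant system \eqref{eq:system_matrix}, inverts $2S-I$ explicitly (Lemma~\ref{lemma:matrix_invertible}), and notes that the solution $i_j=(\alpha_j n+\beta_j)/(2^{k'}-1)$ with $0<\beta_j<2^{k'}-1$ cannot be integral when $n\equiv 0\pmod{2^{k'}-1}$; taking $n\equiv 0\pmod{\prod_{k'=2}^{k}(2^{k'}-1)}$ kills all short cycles. Your approach buys an exact formula for $\min_{w\in W(n)}\size{w}$, dispenses with the circulant-inverse lemma, recovers the paper's special cases (a $2$-cycle exists iff $3\mid 2n+1$, i.e. $n\equiv 1\bmod 3$), and makes the connection with Artin-type order questions transparent; the paper's linear-algebra argument is heavier here, but it is self-contained and would still apply to target strings for which no clean closed form of the inversion permutation is available.
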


Therefore, concatenating the strings of $W(n)$ leads to a string $w$ of size $2n$, who admit a $k$-restricted decomposition -- $W(n)$ -- so that $\rho(W(n))=\size{w}-1$, which is maximal and allows to conclude the proof of Theorem~\ref{thm:ratio}.

\emph{First attempt.} A straightforward way to prove Theorem~\ref{thm:infinite_n} would be to exhibit an infinite number of values of $n$ for which $\size{W(n)}=1$, since we would have $\min_{w\in W(n)} \size{w} = 2n >k$ for $n$ large enough. Unfortunately, the existence of such an infinite sequence is linked to a conjecture by Artin from 1927, which remains unsolved to this day \cite{moree2012artin}. More details can be found in Appendix~\ref{app:artin_conjecture}.

The rest of this section makes extensive use of the process to invert the eBWT detailed in Section~\ref{sec:definition}, in order to determine the multiset of strings $W(n)$.

\emph{Structure of $L$ and $F$.} Remember that $L$ and $F$ are, respectively, the last and the first column in the eBWT matrix. In our context, $L = (ba)^n$ and $F=a^nb^n$. We number each of the letters $a$ and $b$ according to the order in which they appear in $L$ and $F$. Note the following :
\begin{itemize}
    \item $a_i$ is in position $i$ in $F$ and $2i$ in $L$;
    \item $b_i$ is in position $n+i$ in $F$ and $2i-1$ in $L$.
\end{itemize}

\emph{Proof for $k=1$.} If $k=1$, we want to prevent a letter in $L$ from being its own antecedent in $F$. This would imply, for some $1 \leq i \leq n$, that $i=2i$ if such a letter were $a_i$; or that $n+i=2i-1$ if it were $b_i$. Both case are absurd so for $k=1$, any value of $n$ is acceptable.

\emph{Proof for $k=2$.} For some $1\leq i,j\leq n$, a cycle of length $2$ during the inversion of the eBWT would be of the form $a_i\to b_j\to a_i$, as seen below left, and would verify the system provided below right.

\begin{center}
\begin{minipage}[c]{0.3\textwidth}
\centering
        \begin{tikzpicture}[yscale=0.5]

\node at (0,6) {$F$};
\node at (2,6) {$L$};

\draw (-0.25,5.5)--(2.25,5.5);

\node (fa) at (0,5) {$a_i$};
\node (fb) at (0,4) {$b_j$};

\node (lb) at (2,5) {$b_j$};
\node (la) at (2,4) {$a_i$};

\tikzstyle{arc}=[->, thick,lred]
\tikzstyle{retour}=[->, thick,lred,dotted]

\draw[arc] (lb)--(fa);
\draw[retour] (fa)--(la);
\draw[arc] (la)--(fb);
\draw[retour] (fb)--(lb);

\end{tikzpicture}
\end{minipage}~
\begin{minipage}[c]{0.3\textwidth}
\vspace{-0.5\baselineskip}
     $$\begin{cases}
i &= 2j-1\\
n+j &= 2i
\end{cases}$$
\end{minipage}
\end{center}
The system is solved by $i = \frac{2n+1}{3}$ and $ j=\frac{n+2}{3}$ hence such a cycle is possible only if $n \equiv 1 \mod 3$. Therefore, to forbid cycles of length $2$, it suffices to have $n\not\equiv 1\mod 3$, for which an infinite number of values are indeed possible.

\emph{Subsequent values.} Fix some $k\geq 3$ and let $1\leq i_1,\dots,i_k\leq n$. A cycle of length \emph{exactly} $k$ is necessarily of the form $a_{i_1} \to x_{i_2} \to \cdots \to x_{i_{k-1}} \to b_{i_k} \to a_{i_1}$ where $x_{i_j}\in \lbrace a_{i_j},b_{i_j}\rbrace$. Moreover, if we partition the indices $i_1,\dots,i_k$ according to whether the associated letter is an $a$ or a $b$, then each of the two elements of the partition must not contain duplicates for the cycle to be of length exactly $k$. With the notation $t_j = \begin{cases}
    1 &\text{if } x_{i_j} = b_{i_j},\\
    0 &\text{otherwise;}
\end{cases}$ and with the convention $t_1=0$ and $t_k=1$, this non-duplicates condition translates into
\begin{equation}\label{eq:non_duplicates}
 |\lbrace i_j : t_j =1\rbrace| = \sum_{j=1}^k t_j \quad \text{ and }\quad |\lbrace i_j : t_j = 0\rbrace| =k - \sum_{j=1}^k t_j.
\end{equation}

For the aforementioned cycle to exists, the indices $i_1,\dots,i_k$ would also need to verify the following system:
$$\begin{cases}
nt_j + i_j&= 2i_{j+1}-t_{j+1} \quad \forall 1\leq j\leq k-1\\
nt_k+i_k &= 2i_1 - t_1
\end{cases}$$
which is best represented in matrix form as
\begin{equation}\label{eq:system_matrix}
n\begin{pmatrix} t_1 \\
\vdots\\
t_k
\end{pmatrix}+\begin{pmatrix}
 i_1\\
 \vdots\\
 i_k
\end{pmatrix}=2\begin{pNiceMatrix}
0 & 1 & 0 & \Cdots&  0\\
&\Ddots & \Ddots & \Ddots& \Vdots\\
\Vdots & & &&0\\
0& \Cdots &&0 &1\\
1 & 0 & \Cdots&&0
\end{pNiceMatrix}
\begin{pmatrix}
i_1\\
\vdots\\
i_k
\end{pmatrix}-\begin{pNiceMatrix}
0 & 1 & 0 & \Cdots&  0\\
&\Ddots & \Ddots & \Ddots& \Vdots\\
\Vdots & & &&0\\
0& \Cdots &&0 &1\\
1 & 0 & \Cdots&&0
\end{pNiceMatrix}\begin{pmatrix}
t_1\\
\vdots\\
t_k
\end{pmatrix}.
\end{equation}

Denoting by $\mathbf{t}$ the vector $(t_1,\dots,t_k)$, $\mathbf{i}$ the vector $(i_1,\dots,i_k)$ and $S$ the binary matrix, \eqref{eq:system_matrix} is equivalent to
$$n\mathbf{t}+\mathbf{i} = 2S \cdot\mathbf{i}-S\cdot\mathbf{t} \iff \mathbf{i} = (2S-I)^{-1} \cdot(nI+S)\cdot \mathbf{t},$$
% \begin{alignat*}{3}
% &&n\mathbf{t}+\mathbf{i} &= 2S \cdot\mathbf{i}-S\cdot\mathbf{t}\\
% &\iff & (2S-I)\cdot\mathbf{i} &= (nI+S)\cdot\mathbf{t}\\
% &\iff & \mathbf{i} &= (2S-I)^{-1} \cdot(nI+S)\cdot \mathbf{t};
% \end{alignat*}    
provided the matrix $2S-I$ is invertible. We have
$$2S-I = \begin{pNiceMatrix}
-1 & 2 & 0 & \Cdots&  0\\
0&\Ddots & \Ddots & \Ddots& \Vdots\\
\Vdots & \Ddots& &&0\\
0& \Cdots &0 &&2\\
2 & 0 & \Cdots&0 &-1
\end{pNiceMatrix}.$$

We recognize a circulant matrix \cite{kra2012circulant} of the form
$$C(c_0,\dots,c_{k-1})=\begin{pmatrix}
c_0 & c_1 & c_2 & \dots & c_{k-1}\\
c_{k-1} & c_0 & c_1 & \dots& c_{k-2}\\
c_{k-2} & c_{k-1} & c_0 &\dots &c_{k-3}\\
\vdots & \vdots &\vdots&\ddots&\vdots\\
c_1 & c_2 & c_3 &\dots & c_0
\end{pmatrix}$$
where $c_0=-1$, $c_1=2$ and $c_2 = \dots = c_{k-1}= 0$. Note that the general term of any circulant matrix $C(c_0,\dots, c_{k-1})$ is given by $c_{(j-i\mod k)}$.

\begin{lemma}\label{lemma:matrix_invertible}
$2S-I$ is invertible and
$(2S-I)^{-1} = \displaystyle\frac{1}{2^k -1} C(1,2,\dots,2^{k-1}).$
\end{lemma}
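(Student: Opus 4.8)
The plan is to verify the stated identity directly, using the fact that the product of two $k\times k$ circulant matrices is again circulant, with generating sequence the cyclic convolution of the two generating sequences. Recalling that the entry of $C(c_0,\dots,c_{k-1})$ at position $(i,j)$ is $c_{(j-i)\bmod k}$, a one-line computation of the matrix product gives $C(c_0,\dots,c_{k-1})\cdot C(d_0,\dots,d_{k-1}) = C(e_0,\dots,e_{k-1})$ where $e_m = \sum_{j=0}^{k-1} c_j\, d_{(m-j)\bmod k}$.

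First I would instantiate this with $c_0=-1$, $c_1=2$, $c_2=\cdots=c_{k-1}=0$ (the generating sequence of $2S-I$, as identified just before the lemma) and with $d_j = 2^{j}$ for $0\le j\le k-1$ (the generating sequence of $C(1,2,\dots,2^{k-1})$). Only the terms $j=0$ and $j=1$ survive in the convolution, so for $m=0$ one gets $e_0 = (-1)\cdot 2^{0} + 2\cdot 2^{k-1} = 2^{k}-1$ — here the index $m-1=-1$ wraps around to $k-1$, which is exactly what produces the $2^k$ — and for $1\le m\le k-1$ one gets $e_m = (-1)\cdot 2^{m} + 2\cdot 2^{m-1} = 0$. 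Hence $C(c_0,\dots,c_{k-1})\cdot C(1,2,\dots,2^{k-1}) = C(2^k-1,0,\dots,0) = (2^k-1)I$. Since $2^k-1\neq 0$, this simultaneously shows that $2S-I$ is invertible and that $(2S-I)^{-1} = \frac{1}{2^k-1}\,C(1,2,\dots,2^{k-1})$, as claimed.

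An alternative, if one prefers a spectral argument: every $k\times k$ circulant is diagonalized by the Fourier matrix, and the eigenvalues of $C(c_0,\dots,c_{k-1})$ are $\lambda_\ell = \sum_{j=0}^{k-1} c_j\,\omega^{j\ell}$, with $\omega = e^{2i\pi/k}$ and $0\le\ell\le k-1$. Here $\lambda_\ell = 2\omega^{\ell}-1$, which never vanishes since $|2\omega^{\ell}| = 2\neq 1$, giving invertibility; and the eigenvalues of $\frac{1}{2^k-1}C(1,2,\dots,2^{k-1})$ are $\frac{1}{2^k-1}\sum_{j=0}^{k-1}(2\omega^{\ell})^{j} = \frac{1}{2^k-1}\cdot\frac{2^k\omega^{k\ell}-1}{2\omega^{\ell}-1} = \frac{1}{2\omega^{\ell}-1}$, using $\omega^{k\ell}=1$ — precisely $\lambda_\ell^{-1}$. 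As the two circulants share the same eigenbasis and reciprocal eigenvalues, they are inverse to each other.

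There is no substantial obstacle here; the only delicate point is the index arithmetic modulo $k$ in the convolution, specifically that the contribution "$d_{m-1}$" at $m=0$ is $d_{k-1}=2^{k-1}$, which is what yields the factor $2^k$ in $e_0$. I would double-check this wrap-around, and sanity-check the formula against the small case $k=2$, where $2S-I$ has determinant $-(2^2-1)=-3$ and its inverse is readily seen to equal $\tfrac{1}{3}\,C(1,2)$.
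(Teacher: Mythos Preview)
Your primary argument is correct and is essentially the paper's own proof: both verify directly that $(2S-I)\cdot C(1,2,\dots,2^{k-1})=(2^k-1)I$, the paper by computing the $(i,j)$-entry and you by phrasing the same computation as a cyclic convolution of the two generating sequences; the key wrap-around at $m=0$ is exactly the paper's case $i=j$, where $d_{(-1\bmod k)}=2^{k-1}$ produces the term $2^k$. Your spectral alternative is a genuinely different route the paper does not take, and it is also correct (the eigenvalues $2\omega^{\ell}-1$ never vanish, and the geometric-series identity gives the reciprocal eigenvalues); it has the small advantage of yielding invertibility without first guessing the inverse.
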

\begin{proof}
The proof is deferred to Appendix~\ref{app:proof_matrix_invertible}.
\end{proof}

The solution to \eqref{eq:system_matrix} is therefore given by 
\begin{alignat*}{3}
   && \mathbf{i} &= (2S-I)^{-1} \cdot (nI+S)\cdot \mathbf{t}\\
   &\iff& \mathbf{i} &= \frac{1}{2^k-1} \Biggl(n \cdot C(1,2,\dots,2^{k-1}) \cdot  \mathbf{t} +  C(1,2,\dots,2^{k-1}) \cdot S \cdot \mathbf{t}\Biggl)\\
      &\iff& \mathbf{i} &=  \frac{1}{2^k-1} \Biggl(n \cdot  C(1,2,\dots,2^{k-1}) \cdot \mathbf{t} +  C(2^{k-1},1,\dots,2^{k-2}) \cdot \mathbf{t}\Biggl)
\end{alignat*}
noticing that $C(c_1,\dots,c_k)\cdot S = C(c_k,c_1,\dots,c_{k-1})$. Going back to the variables $i_j$, with $1\leq j\leq k$, we get
\begin{equation}\label{eq:solution_systeme}
i_j = \frac{\left(\displaystyle\sum_{l=1}^k 2^{(l-j\mod k)}\cdot t_l\right) n + \left(\displaystyle\sum_{l=1}^k 2^{(l-j-1\mod k)} \cdot t_l\right)}{2^k-1}, 
\end{equation}
where $(l-j\mod k)$ and $(l-j-1\mod k)$ are to be chosen in the range $[\![0,k-1]\!]$ in case of negative values. We rewrite \eqref{eq:solution_systeme} as
$$i_j = \frac{\alpha_j \cdot n + \beta_j}{2^k-1}.$$
Recall that $t_1=0$ and $t_k=1$. Therefore, $\mathbf{t}=(t_1,\cdots,t_k)$ can neither be $(0,\dots,0)$ nor $(1,\dots,1)$. Hence, $0<\alpha_j,\beta_j < 2^k-1$.

Remember that, for a cycle of length exactly $k$ to exist, we must have (i) $i_j\in\mathbb{N}$, (ii) $1\leq i_j\leq n$  and (iii) equation \eqref{eq:non_duplicates} must hold. Each of these conditions is a necessary condition. It is therefore sufficient to break just one of them to guarantee that no cycle of length exactly $k$ can exist. (i) is equivalent to $\alpha_j\cdot n + \beta_j \equiv 0 \mod 2^k-1$. Since $\beta_j\not\equiv 0\mod 2^k-1$, it suffices to choose $n\equiv 0 \mod 2^k-1$ to ensure that $i_j\not\in\mathbb{N}$.

Therefore, in the context of Theorem~\ref{thm:infinite_n}, since we want to forbid the presence of any cycle of length $\leq k$, it suffices to choose 
$$n\equiv 0 \mod \prod_{k'=2}^k (2^{k'}-1),$$
for which there is indeed an infinite number of values, as claimed.

\section*{Acknowledgements}

F.I. is funded by a grant from the French ANR: Full-RNA ANR-22-CE45-0007.

\bibliography{main}

\appendix

\section{When $W(n)$ is reduced to a single string}\label{app:artin_conjecture}

Theorem~\ref{thm:infinite_n} would be straightforward if there were an infinite number of values of $n$ such that $|W(n)|=1$, since then we would have $\min_{w\in W(n)} |w| = 2n > k$ for $n$ large enough. Whenever $|W(n)|=1$, we have $\texttt{ebwt}(W(n))=\texttt{bwt}(W(n))$, and therefore the associated values of $n$ corresponds to the ones where the string $(ba)^n$ admits an antecedent with the BWT. A proper characterization of these values of $n$ was given in \cite[Proposition~4.3]{mantaci2017measuring}, as reproduced below.

\begin{proposition}[Mantaci et al., 2017]\label{prop:mantaci}
$(ba)^n$ admits an antecedent with the BWT if and only if $n+1$ is an odd prime number and $2$ generates the multiplicative group $\mathbb{Z}_{n+1}^\ast$.
\end{proposition}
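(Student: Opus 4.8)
The plan is to translate ``$(ba)^n$ admits a BWT antecedent'' into a statement about the cycle structure of the eBWT-inversion permutation, and then into a number-theoretic condition on the multiplicative order of $2$. Since $\texttt{ebwt}$ and $\texttt{bwt}$ coincide on singletons, the appendix reduction tells us that $(ba)^n$ has a single-string antecedent exactly when $\size{W(n)}=1$, i.e. when the inversion procedure of Section~\ref{sec:definition} (Proposition~\ref{prop:inverting_ebwt}) applied to $L=(ba)^n$, $F=a^nb^n$ produces a single cycle. First I would make the inversion map explicit: reading off $F[j]$ and relocating it in $L$ with the same rank sends row $j$ to the row $\mu(j)$ with $L[\mu(j)]=F[j]$. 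Using the position formulas already recorded in Section~\ref{sec:antecedents_ba} --- $a_i$ at $F$-position $i$ and $L$-position $2i$, and $b_i$ at $F$-position $n+i$ and $L$-position $2i-1$ --- a direct case split gives $\mu(j)=2j$ for $1\le j\le n$ and $\mu(j)=2j-(2n+1)$ for $n+1\le j\le 2n$.

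The key observation is that both branches collapse to $\mu(j)\equiv 2j \pmod{2n+1}$, once the $2n$ rows are identified with the nonzero residues $\{1,\dots,2n\}=\mathbb{Z}_{2n+1}\setminus\{0\}$: indeed $2j\le 2n<2n+1$ in the first range, while $2j-(2n+1)$ is the reduction of $2j$ in the second. In other words, $\mu$ is precisely multiplication by $2$ modulo $2n+1$. Consequently $\size{W(n)}$ is the number of orbits of this map, and $(ba)^n$ admits a single-string antecedent if and only if doubling acts as a single $2n$-cycle on $\{1,\dots,2n\}$.

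It then remains to characterize when doubling is a single cycle. The plan has two halves. (i) If $2n+1$ is composite, pick any proper divisor $d$; since $2n+1$ is odd we have $\gcd(2x,2n+1)=\gcd(x,2n+1)$, so $\mu$ preserves the $\gcd$ with $2n+1$. Then the units and the non-units (e.g. $d$ itself is a non-unit, while $1$ is always a unit) lie in separate invariant sets, forcing at least two orbits --- hence no single cycle. So $2n+1$ must be prime (and being $\ge 3$ and odd, it is automatically an odd prime). (ii) If $2n+1=p$ is prime, then $\{1,\dots,p-1\}=\mathbb{Z}_p^\ast$, and the orbit of $1$ is $\{2^0,2^1,\dots\}$, which exhausts $\mathbb{Z}_p^\ast$ if and only if $2$ has order $p-1=2n$, i.e. iff $2$ is a primitive root modulo $p$. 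Combining (i) and (ii) gives the characterization. I expect the orbit-separation argument in the composite case and the clean ``single cycle $\iff$ primitive root'' equivalence to be the only non-routine steps; everything else is bookkeeping with the position formulas.

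One caveat I would flag explicitly: this analysis forces the modulus $\mathbf{2n+1}$, not $n+1$, and this is consistent with the rest of Section~\ref{sec:antecedents_ba}. The computation there that a length-$2$ cycle exists iff $n\equiv 1 \pmod 3$ is exactly the statement $3\mid 2n+1$; more generally the ``forbid $k$-cycles'' condition used to prove Theorem~\ref{thm:infinite_n} amounts to controlling $\gcd(2^{k}-1,\,2n+1)$, precisely the quantity governing cycle lengths of doubling mod $2n+1$. The small cases are decisive: for $n=1$ we have $ba=\texttt{bwt}(ab)$, so an antecedent exists and $2n+1=3$ is prime with $2$ a primitive root, whereas $n+1=2$ is not an odd prime; for $n=4$ one gets $2n+1=9$ composite, $\mu$ splits into the cycles $(1\,2\,4\,8\,7\,5)$ and $(3\,6)$, and $W(4)=\lbrace\!\lbrace baaabb,ba\rbrace\!\rbrace$ has two elements, whereas $n+1=5$ is prime with $2$ a primitive root. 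Both cases rule out the ``$n+1$'' form and confirm the ``$2n+1$'' one, so I would prove Proposition~\ref{prop:mantaci} with ``$n+1$'' read as ``$2n+1$''.
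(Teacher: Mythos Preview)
The paper does not supply its own proof of Proposition~\ref{prop:mantaci}; the result is quoted from Mantaci et al.~\cite{mantaci2017measuring}. Your argument is therefore a self-contained proof rather than a reconstruction, and it is correct: identifying the inversion permutation with multiplication by $2$ on $\mathbb{Z}_{2n+1}\setminus\{0\}$ is exactly the right reduction, and the two halves of the characterization (composite modulus $\Rightarrow$ $\gcd$-invariant splitting into at least two orbits; prime modulus $\Rightarrow$ single orbit iff $2$ has order $2n$, i.e.\ is a primitive root) are both sound.

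You are also right to flag the modulus. Under the conventions of Section~\ref{sec:antecedents_ba} (where $(ba)^n$ has length $2n$), the condition must read $2n+1$, not $n+1$, and your counterexamples settle it. The paper's own length-$2$ cycle computation already confirms this: it finds a $2$-cycle exactly when $n\equiv 1\pmod 3$, i.e.\ when $3\mid 2n+1$; for $n=4$ this produces the cycle $a_3b_2$, so $\size{W(4)}\ge 2$, yet $n+1=5$ is prime with primitive root $2$. Likewise $n=1$ has $\texttt{bwt}(ab)=ba$ while $n+1=2$ is not an odd prime. The likely explanation is a transcription mismatch (in \cite{mantaci2017measuring} the parameter is the string \emph{length}, which here is $2n$); in any case, with ``$n+1$'' read as ``$2n+1$'', your proof establishes the proposition, and the accompanying list should become $1,2,5,6,9,14,18,26,\dots$
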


The first values of $n$ satisfying the conditions of Proposition~\ref{prop:mantaci} are
$$2, 4, 10, 12, 18, 28, 36, 52, 58,\dots $$
and correspond to the sequence of integers $n$ such that $n + 1$ belongs to the sequence \href{https://oeis.org/A001122}{A001122} of the OEIS\footnote{OEIS Foundation Inc. (2025), The On-Line Encyclopedia of Integer Sequences, Published electronically at \url{https://oeis.org.}}. Unfortunately, it is unknown whether this sequence is infinite or not. Emil Artin conjectured in 1927 that this sequence is infinite, but no proof has yet been established \cite{moree2012artin}. Therefore, we cannot conclude about Theorem~\ref{thm:infinite_n}; however, we thought useful to mention this direction, since a solution to Artin's conjecture would make the result immediate.

\section{Proof of Lemma~\ref{lemma:matrix_invertible}}\label{app:proof_matrix_invertible}
Let us denote $P=2S-I$, $Q =  \frac{1}{2^k -1} C(1,2,\dots, 2^{k-1})$ and $R=PQ$. We have $P = C(c_0,\dots,c_{k-1})$ with $c_0=-1$, $c_1=2$ and $c_2=\dots=c_{k-1} = 0$. To simplify notations, let $d_j = 2^{j}$ so that $Q = \frac{1}{2^k -1} C(d_0,\dots,d_{k-1})$. Finally, remember that the general term $C_{i,j}$ of a circulant matrix $C(c_0,\dots, c_{k-1})$ is given by $c_{(j-i\mod k)}$. 

We identify $R$ with the identity matrix. We have $$R_{i,j} = \sum_{l=1}^k P_{i,l} \cdot Q_{l,j} =  \sum_{l=1}^k\frac{c_{(l-i\mod k)}\cdot  d_{(j-l\mod k)}}{2^k-1}.$$
Since $c_0 = -1$, $c_1=2$ and $c_2 = \dots =c_{k-1}=0$, we have $l-i \equiv 0 \mod k \iff l\equiv i \mod k \iff l=i$ and $l-i \equiv 1 \mod k\iff l\equiv i+1 \mod k$, leading to
$$R_{i,j} = \frac{2d_{(j-i-1\mod k)}-d_{(j-i\mod k)}}{2^k-1}.$$

Remember that $d_j = 2^{j}$. This gives us, for $i=j$,
$$R_{i,i} = \frac{2d_{(i-i-1\mod k)}-d_{(i-i\mod k)}}{2^k-1} = \frac{2d_{k-1}-d_{0}}{2^k-1}=1$$
and, for $i\neq j$, denoting $p=j-i\mod k$ and noticing that $1\leq p\leq k-1$, 
$$R_{i,j} = \frac{2d_{(p-1 \mod k)}-d_{(p\mod k)}}{2^k-1} = \frac{2d_{p-1}-d_{p}}{2^k-1}=\frac{2\cdot 2^{p-1}-2^{p}}{2^k-1}=0.$$
Therefore, $R=I$ and $Q=P^{-1}$.

\end{document}